\newtheorem{theorem}{Theorem}[section]
\newtheorem{lemma}[theorem]{Lemma}
\newtheorem{definition}[theorem]{Definition}
\newtheorem{remark}[theorem]{Remark}
\def\N{{\mathbb N}}
\def\reals{{\mathbb R}}
\def\eps{{\varepsilon}}
\def\bd{{\partial}}
\def\A{{\cal A}}
\def\td{{\sf td}}
\def\mx{{\rm max}}
\def\vol{{\rm Vol}}
\def\conv{{\rm conv}}
\def\Pr{{\bf Pr}}
\begin{document}

\bibliographystyle{plainurl}
\title{How to Find a Point in the Convex Hull Privately}

\author{Haim Kaplan\thanks{School of Computer Science, Tel Aviv University, Tel~Aviv, and Google,
haimk@tau.ac.il.
Partially supported by ISF grant 1595/19 and grant 1367/2016
   from the German-Israeli Science Foundation (GIF)}
\and
Micha Sharir\thanks{School of Computer Science, Tel Aviv University, Tel~Aviv,
   Israel, michas@tau.ac.il. Partially supported by ISF Grant 260/18,  by grant 1367/2016
   from the German-Israeli Science Foundation (GIF), and by
   Blavatnik Research Fund in Computer Science at Tel Aviv University.}
\and
Uri Stemmer\thanks{Department of Computer Science, Ben-Gurion University, and Google, u@uri.co.il.
Partially supported by ISF grant 1871/19.}%
}


\maketitle

\begin{abstract}
We study the question of how to compute a point in the convex hull of an input set $S$ of $n$ points in $\reals^d$ in a differentially private manner. This question, which is trivial without privacy requirements, turns out to be quite deep when imposing differential privacy. In particular, it is known that the input points must reside on a fixed {\em finite} subset $G\subseteq\reals^d$, and furthermore, the size of $S$ must grow with the size of $G$. Previous works~\cite{BKN10,BeimelMNS19,BNS13b,BunDRS18,BNSV15,KLMNS19} focused on understanding how $n$ needs to grow
 with $|G|$, and showed that $n=O\left( d^{2.5} \cdot 8^{\log^*|G|} \right)$ suffices (so $n$ does not have to grow significantly
  with $|G|$). However, the available constructions exhibit running time at least $|G|^{d^2}$, where typically $|G|=X^d$ for some (large) discretization parameter $X$, so the running time is in fact $\Omega(X^{d^3})$.

In this paper we give a differentially private algorithm that runs in $O(n^d)$ time, assuming that
$n=\Omega(d^4 \log X)$. To get this result we study and exploit some structural properties of the Tukey levels (the regions $D_{\ge k}$ consisting of points whose Tukey depth is at least $k$, for $k=0,1,\dots$). In particular, we derive lower bounds on their volumes for point sets $S$ in general position, and develop a rather subtle mechanism for handling point sets $S$ in degenerate position (where
the deep Tukey regions have zero volume). A naive approach to the construction of the Tukey regions requires $n^{O(d^2)}$ time. To reduce
the cost to $O(n^d)$, we use an approximation scheme for estimating the volumes of the Tukey regions (within their affine spans in case of degeneracy), and for sampling a point from such a region, a scheme that is based on the volume estimation framework of Lov\'asz and Vempala~\cite{LV:vol} and of Cousins and Vempala~\cite{CV}. Making this framework differentially private raises a set of technical challenges that we  address.
\end{abstract}


\section{Introduction}
We often would like to analyze data  while protecting the privacy of the individuals that contributed to it.
At first glance, one might hope to ensure privacy by simply deleting all names and ID numbers from
the data. However, such anonymization schemes 
are proven time and again to violate privacy.
This gave rise of a theoretically-rigorous line of work that has placed private data analysis on firm foundations, centered around a mathematical definition for privacy known as {\em differential privacy}~\cite{DMNS06}.

Consider a database $S$ containing personal information of individuals. Informally, an algorithm operating on such a database is said to preserve differential privacy if its outcome distribution is (almost) insensitive to any arbitrary change to the data of one individual in the database. Intuitively, this means that an observer looking at the outcome of the algorithm (almost) cannot distinguish between whether Alice's information is $x$ or $y$ (or whether Alice's information is present in the database at all) because in any case it would have (almost) no effect on the outcome distribution of the algorithm.

\begin{definition}[Dwork et al.~\cite{DMNS06}]\label{def:DP}
Two databases (multisets) $S$ and $S'$ are called {\em neighboring} if  they differ in a single entry.
That is, $S = S_0\cup \{x\}$ and $S'=S_0\cup \{y\}$ for some items $x$ and $y$. A randomized algorithm $A$ is $(\varepsilon,\delta)$-{\em differentially private} if for every two neighboring databases $S,S'$ and for any event $T$ we have
$$
\Pr[A(S)\in T]\leq e^{\varepsilon}\cdot\Pr[A(S')\in T]+\delta.
$$
When $\delta=0$ this notion is referred to as {\em pure} differential privacy, and when $\delta>0$ it is referred to as {\em approximate} differential privacy.
\end{definition}

\begin{remark}
Typically, $\eps$ is set to be a small constant, say $\eps=0.1$, and $\delta$ is set to be a small function of the database size $|S|$ (much smaller than $1/|S|$). Note that to satisfy the definition (in any meaningful way) algorithm $A$ must be randomized.
\end{remark}

Differential privacy is increasingly accepted as a standard for rigorous treatment of
privacy. However, even though the field has witnessed an explosion of research in the recent years, much remains unknown and answers to fundamental questions are still missing. In this work we study one such fundamental question, already studied in~\cite{BKN10,BeimelMNS19,BNS13b,BunDRS18,BNSV15,KLMNS19}: Given a database containing points in $\reals^d$ (where every point is assumed to be the information of one individual), how can we {\em privately} identify a point in the {\em convex hull} of the input points? 
This question, which is trivial without privacy requirements, turns out to be quite deep when imposing differential privacy. In particular, Bun et al.~\cite{BNSV15} showed that in order to be able to solve it, we must assume that the input points reside on a fixed {\em finite} subset $G\subseteq\reals^d$, and furthermore, the number of input points must grow with the size of $G$.

\medskip
\setlength{\fboxsep}{10pt}
\noindent\fbox{\parbox{0.95\textwidth}{
{\bf The Private Interior Point (PIP) Problem.}\\ 
Let $\beta,\varepsilon,\delta,X$ be positive
 parameters where $\beta,\varepsilon,\delta$ are small and $X$ is a large integer. 
Let $G\subseteq[0,1]^d$ be a finite uniform grid with side steps $1/X$ (so $|G|=(X+1)^d$).  
Design an algorithm $A$ such that for some $n\in\N$ (as small as possible as a function of $\beta,\varepsilon,\delta,X$) we have
\begin{enumerate}
	\item {\bf Utility:} For every database $S$ containing at least $n$ points from $G$ it holds that $A(S)$ returns a point in the convex hull of $S$ with probability at least $1-\beta$. (The outcome of $A$ does not have to be in $G$.)
	\item {\bf Privacy:} For every pair of neighboring databases $S,S'$, each containing $n$ points from $G$, and for any event $T$, we have
	$\Pr[A(S)\in T]\leq e^\eps \cdot\Pr[A(S')\in T]+\delta.$
\end{enumerate}
}}
\medskip

The parameter $n$ is referred to as {\em the sample complexity} of the algorithm. 
It is the smallest number of points on which we are guaranteed to succeed (not to be confused with the actual size of the input).
The PIP problem is very natural on its own. Furthermore, as was observed in~\cite{BeimelMNS19}, an algorithm for solving the PIP problem can be used as a building block in other applications with differential privacy, such as learning halfspaces and linear regression. 
Previous works~\cite{BKN10,BeimelMNS19,BNS13b,BunDRS18,BNSV15,KLMNS19} have focused on the task of minimizing the sample complexity $n$ while ignoring the runtime of the algorithm. In this work we seek an efficient algorithm for the PIP problem, that still keeps the sample complexity $n$ ``reasonably small'' (where ``reasonably small'' will be made precise after we introduce some additional notation). 

\subsection{Previous Work}

Several papers studied the PIP problem for  $d=1$.  
In particular, three different constructions with sample complexity $2^{O(\log^*|G|)}$ were presented in~\cite{BNS13b,BunDRS18,BNSV15} (for $d=1$). Recently, Kaplan et al.~\cite{KLMNS19} presented a new construction with sample complexity $O((\log^*|G|)^{1.5})$ (again, for $d=1$).
Bun et al.~\cite{BNSV15} gave a lower bound showing that every differentially private algorithm for this task must have sample complexity $\Omega(\log^*|G|)$. Beimel et al.~\cite{BeimelMNS19} incorporated a dependency in $d$ to this lower bound, and showed that every differentially private algorithm for the PIP problem must use at least $n=\Omega(d+\log^*|G|)$ input points.

For the case of {\em pure} differential privacy (i.e., $\delta=0$), a lower bound of $n=\Omega(\log X)$ on the sample complexity follows from the results of Beimel et al.~\cite{BKN10}. This lower bound is tight, as an algorithm with sample complexity $n=O(\log X)$ (for $d=1$) can be obtained using a generic tool in the literature of differential privacy, called the {\em exponential mechanism}~\cite{MT07}. We sketch this application of the exponential mechanism here. (For a precise presentation of the exponential mechanism see Section \ref{sec:exp}.) Let $G=\{0,\frac{1}{X},\frac{2}{X},\dots,1\}$ be our (1-dimensional) grid within the interval $[0,1]$, and let $S$ be a multiset containing $n$ points from $G$. The algorithm is as follows.
\begin{enumerate}
\item For every $y\in G$ define the {\em score} $q_S(y)=\min\left\{ |\{x\in S \mid x\geq y\}|,\; |\{x\in S \mid x\leq y\}|  \right\}.$
\item Output $y\in G$ with probability proportional to $e^{\eps\cdot q_S(y)}$.
\end{enumerate}

Intuitively, this algorithm satisfies differential privacy because changing one element of $S$ changes the score $q_S(y)$ by at most $\pm1$, and thus changes the probabilities with which we sample elements by roughly an $e^{\eps}$ factor. As for the utility analysis, observe that $\exists y\in G$ with $q_S(y)\geq\frac{n}{2}$, and the probability of picking this point is (at least) proportional to $e^{\eps n/2}$. As this probability increases exponentially with $n$, by setting $n$ to be big enough we can ensure that points $y'$ outside of the convex hull (those with $q_S(y')=0$) get picked with very low probability.

Beimel et al.~\cite{BeimelMNS19} observed that this algorithm extends to higher dimensions by replacing $q_S(y)$ with the {\em Tukey depth} $\td_S(y)$ of the point $y$ with respect to the input set $S$ (the Tukey depth of a point $y$ is the minimal number of points that need to be removed from $S$ to ensure that $y$ is {\em not} in the convex hull of the remaining input points). However, even though there must exist a point $y\in\reals^d$ with high Tukey depth (at least $n/(d+1)$; see~\cite{Matousek:2002}), the finite grid $G\subseteq\reals^d$ might fail to contain such a point. Hence, 
Beimel et al.~\cite{BeimelMNS19} first {\em refined} the grid $G$ into a grid $G'$ that contains a point with high Tukey depth, and then randomly picked a point $y$ from $G'$ with probability proportional to $e^{\eps\cdot\td_S(y)}$. To compute the probabilities with which grid points are sampled, the algorithm in~\cite{BeimelMNS19} explicitly computes the Tukey depth of every point in $G'$, which, because of the way in which $G'$ is defined, results in running time of at least $\Omega(|G|^{d^2})=\Omega(X^{d^3})$ and sample complexity $n=O(d^3 \log |G|)=O(d^4 \log X)$. Beimel et al.\ then presented an improvement of this algorithm with reduced sample complexity of $n=O(d^{2.5}\cdot8^{\log^*|G|})$, but the running time remained $\Omega(|G|^{d^2})=\Omega(X^{d^3})$.   

\subsection{Our Construction}

We give an approximate differentially private algorithm
for the private interior point problem that runs in $O(n^d)$  time,\footnote{When we use  $O$-notation for time complexity we  hide logarithmic factors in $X$, $1/\eps$, $1/\delta$, and polynomial factors in $d$. We assume operations on real numbers in $O(1)$ time (the so-called real RAM model).}  and succeeds with high probability when the size of its input is $\Omega(\frac{d^4}{\eps} \log \frac{X}{\delta})$.
Our algorithm is obtained by carefully implementing the exponential mechanism and reducing its running time from $\Omega(|G|^{d^2})=\Omega(X^{d^3})$ to $O(n^d)$. We now give an informal overview of this result. 

To avoid the need to extend the grid and to calculate the Tukey depth of each point in the extended grid, we sample our output directly from $[0,1]^d$. To compute the probabilities with which we sample a point from $[0,1]^d$ we compute, for each $k$ in an appropriate range, the {\em volume} of the Tukey region of depth $k$, which we denote as $D_k$. (That is, $D_k$ is the region in $[0,1]^d$ containing all points with Tukey depth exactly $k$.) We then sample a value $k\in[n]$ with probability proportional to $\vol(D_k)\cdot e^{\eps k}$, and then sample a random point uniformly from $D_k$.

Observe that this strategy picks a point with Tukey depth $k$ with probability proportional to $\vol(D_k)\cdot e^{\eps k}$. Hence, if for a ``large enough'' value of $k$ (say $k\geq \frac{n}{cd}$ for a suitable absolute constant $c>1$) we have that $\vol(D_k)$ is ``large enough'', then a point with Tukey depth $k$ is picked with high probability. 
However, if $\vol(D_k)=0$ (or too small) then a point with Tukey depth $k$ is picked with probability zero (or with too small a probability). Therefore, to apply this strategy, we derive a lower bound on the volume of every non-degenerate Tukey region, showing that if the volume is non-zero, then it is at least $\Omega\left(1/X^{d^3}\right)$.

There are two issues here. The first issue is that
the best bound we know on the complexity of a
 Tukey region is
$O( n^{(d-1)\lfloor d/2 \rfloor})$, so  we cannot compute these regions explicitly (in the worst-case) in time $O(n^d)$ (which is our target runtime complexity). We avoid the need to compute the Tukey regions explicitly by using an approximation scheme for estimating the volume of each region and for sampling a point from such a region, a scheme that is based on the volume estimation framework of Lov\'asz and Vempala~\cite{LV:vol} and of Cousins and Vempala~\cite{CV}. The second issue is that it might be the case that all Tukey regions for large values of $k$ are degenerate, i.e., have volume 0, in which case this strategy might fail to identify a point in the convex hull of the input points.

\medskip
\noindent{\bf Handling degeneracies.} 
We show that if the Tukey regions of high depth are degenerate, then many of the input points must lie in a lower-dimensional affine subspace.  
This can be used to handle degenerate inputs $S$ as follows.
We first (privately) check whether there exists an affine proper subspace that
contains many points of $S$. If we find such a subspace $f$, we recursively
continue the procedure within $f$, with respect to $S\cap f$.
Otherwise, if no such subspace exists, then it must be the case that the Tukey regions of high depth are full-dimensional (with respect to the subspace into which we have recursed so far), so we can apply our algorithm for the non-degenerate case and obtain a point that lies, with high probability, in the convex hull of the surviving subset of $S$, and thus of the full set $S$.

We remark that it is easy to construct algorithms with running time polynomial in the input size $n$, when $n$ grows exponentially in $d$. (In that case one can solve the problem using a reduction to the 1-dimensional case.)
In this work we aim to reduce the running time while keeping the sample complexity $n$ at most polynomial in $d$ and in $\log|G|$.


    \section{Preliminaries}

   We assume that our input set $S$ consists of $n$ points that lie in the intersection of a grid $G$
    with the cube $[0,1]^d$ in $\reals^d$.
    We assume that $G$ is of
    side length $1/X$ for a given accuracy integer parameter $X$, so it
    partitions the cube into $X^d$ cells.

\subsection{The exponential mechanism}
\label{sec:exp}
Let $G^*$ denote the set of all finite databases over a grid $G$, and let $F$ be a finite set. Given a database $S\in G^*$, a {\em quality} (or {\em scoring}) function $q:G^*\times F \rightarrow \N$ assigns a number $q(S,f)$ to each element  $(S,f)\in G^*\times F$, identified as the ``quality'' of $f$ with respect to $S$. We say that the function $q$ has {\em sensitivity} $\Delta$ if for all neighboring databases $S$ and $S'$ and for all $f\in F$ we have $|q(S,f)-q(S',f)|\leq \Delta$.

The \emph{exponential mechanism} of McSherry and Talwar~\cite{MT07} privately identifies an element $f\in F$ with large quality $q(S,f)$. Specifically, it chooses an element $f\in F$ randomly, with probability proportional to $\exp\left(\eps \cdot q(S,f) /(2 \Delta)\right)$. The privacy and utility of the mechanism are:

\begin{theorem}[McSherry and Talwar \cite{MT07}] \label{prop:exp_mech}
The exponential mechanism is $(\eps,0)$-differentially private.
Let $q$ be a quality function with sensitivity $\Delta$. Fix a database $S \in G^*$ and let ${\rm OPT} = \max_{f\in F}\{q(S,f)\}$.
For any $\beta\in (0,1)$, with
 probability at least $(1-\beta)$, the exponential mechanism outputs a solution $f$ with quality
$q(S,f)\geq{\rm OPT}-\frac{2\Delta}{\eps}\ln\frac{|F|}{\beta}$.
\end{theorem}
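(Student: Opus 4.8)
The plan is to establish the two assertions separately, each by a short direct computation with the output distribution. Introduce the shorthand $w_S(f) = \exp\big(\eps\, q(S,f)/(2\Delta)\big)$ and $W_S = \sum_{g\in F} w_S(g)$, so that on input $S$ the mechanism outputs $f$ with probability $w_S(f)/W_S$. This is well defined since $F\neq\emptyset$ and all weights are strictly positive, hence $W_S>0$.

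\textbf{Privacy.} Fix neighboring databases $S,S'$ and an element $f\in F$. Since $|q(S,f)-q(S',f)|\le\Delta$, the weight of $f$ changes by at most a multiplicative factor $e^{\eps/2}$ in each direction: $e^{-\eps/2}\le w_S(f)/w_{S'}(f)\le e^{\eps/2}$. Summing this two-sided inequality over all $g\in F$ gives the same two-sided bound for the normalizers, $e^{-\eps/2}\le W_S/W_{S'}\le e^{\eps/2}$. Multiplying the bounds, $\Pr[A(S)=f]/\Pr[A(S')=f]\le e^{\eps/2}\cdot e^{\eps/2}=e^{\eps}$. As $F$ is finite, for any event $T\subseteq F$ we sum over $f\in T$ to obtain $\Pr[A(S)\in T]\le e^{\eps}\Pr[A(S')\in T]$, which is $(\eps,0)$-differential privacy.

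\textbf{Utility.} Fix $S$, let ${\rm OPT}=\max_{g\in F} q(S,g)$, and set $t=\frac{2\Delta}{\eps}\ln\frac{|F|}{\beta}$. Every $f$ with $q(S,f)\le{\rm OPT}-t$ has $w_S(f)\le e^{\eps({\rm OPT}-t)/(2\Delta)}$, while $W_S\ge w_S(g^*)=e^{\eps\,{\rm OPT}/(2\Delta)}$ for an optimal element $g^*$. Hence
$$\Pr\big[q(S,A(S))\le{\rm OPT}-t\big] = \frac{\sum_{f:\, q(S,f)\le{\rm OPT}-t} w_S(f)}{W_S} \le \frac{|F|\cdot e^{\eps({\rm OPT}-t)/(2\Delta)}}{e^{\eps\,{\rm OPT}/(2\Delta)}} = |F|\,e^{-\eps t/(2\Delta)} = \beta ,$$
so with probability at least $1-\beta$ the mechanism outputs $f$ with $q(S,f)>{\rm OPT}-t$, and in particular $q(S,f)\ge{\rm OPT}-\frac{2\Delta}{\eps}\ln\frac{|F|}{\beta}$, as claimed.

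There is no substantial obstacle here; the whole argument is essentially a two-line calculation. The only points that call for a little care are, in the privacy step, using the \emph{two-sided} bound on the normalizer ratio $W_S/W_{S'}$ (a one-sided bound on individual weights would only yield $e^{\eps/2}$ and not the needed $e^{\eps}$), and, in the utility step, not attempting to estimate the partition function $W_S$ exactly but merely bounding it from below by the single term coming from an optimal element — which is precisely what makes the tail bound come out cleanly.
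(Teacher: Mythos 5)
Your proof is correct. The paper itself gives no proof of this statement — it is imported verbatim from McSherry and Talwar \cite{MT07} — and your argument is the standard one for the exponential mechanism: the factor $e^{\eps/2}$ on individual weights combined with the two-sided $e^{\eps/2}$ bound on the normalizer yields $(\eps,0)$-privacy, and bounding the partition function from below by the single optimal term gives the $|F|e^{-\eps t/(2\Delta)}=\beta$ tail bound for utility. Nothing is missing.
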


    \subsection{Tukey depth}

    The \emph{Tukey depth} \cite{Tukey} $\td_S(q)$ of a point $q$
    with respect to $S$ is the minimum number of points of $S$ we need to
    remove to make $q$ lie outside the convex hull of the remaining subset.
    Equivalently, $\td_S(q)$ is the smallest number of points
    of $S$ that lie in a closed halfspace containing $q$.
    We will write $\td(q)$ for $\td_S(q)$ when the set $S$ is clear from the context.
    It easily follows from Helly's theorem that there is always a point of
    Tukey depth at least $n/(d+1)$ (see, e.g., \cite{Matousek:2002}).
        We denote the largest Tukey depth of a point by $\td_\mx(S)$ (the maximum Tukey depth is always at most $n/2$).

    We define the regions
    $
    D_{\ge k}(S) = \left\{ q\in [0,1]^d \mid \td_S(q \ge k \right\}$ and $D_{k}(S) = D_{\ge k}(S) \setminus D_{\ge k+1}(S)$
    for $k = 0,\ldots,\td_\mx(S)$.
    Note that $D_{\ge 1}$ is the convex hull of $S$ and that $D_{\ge 0} = [0,1]^d$.
    It is easy to show that $D_{\ge k}$ is convex; $D_{\ge k}$ is in fact the intersection
    of all (closed) halfspaces containing at least $n-k+1$ points of $S$; see \cite{RR98}.
        It is easy to show that all this is true also when $S$ is degenerate.
See Figure~\ref{layers} for an illustration. The following lemma is easy to establish.

    \begin{lemma}
    \label{lem:non-deg-D}
    \label{non-deg-D}
    If $D_{\ge k}$ is of dimension $d$ (we refer to such a region as non-degenerate)
    then
    $C_k = \bd D_{\ge k}(S)$ is a convex polytope, each of whose facets
    is contained in  a simplex $\sigma$
    spanned by $d$ points of $S$, such that one of the open halfspaces bounded
    by the hyperplane supporting $\sigma$ contains exactly $k-1$ points of $S$.
    \end{lemma}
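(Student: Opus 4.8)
The plan is to read the structure of $D_{\ge k}$ off the halfspace description recalled just above the lemma: $D_{\ge k}=[0,1]^d\cap\bigcap\{K:\ K\text{ a closed halfspace with }|S\cap K|\ge n-k+1\}$. Convexity and boundedness are immediate from this. For a unit vector $u$ write $\ell_u$ for the $(n-k+1)$-st smallest of the values $\langle s,u\rangle$, $s\in S$, and $H_u^-=\{x:\langle x,u\rangle\le\ell_u\}$ for the innermost enclosing halfspace in direction $u$; then $D_{\ge k}=[0,1]^d\cap\bigcap_u H_u^-$, and the bounding hyperplane of $H_u^-$ always meets $S$. The whole lemma will follow once we show that every facet of $D_{\ge k}$ lies on a hyperplane spanned by $d$ points of $S$ with exactly $k-1$ points of $S$ strictly on the outer side; in particular facets are then supported by only finitely many hyperplanes, so $D_{\ge k}$ is a polytope.

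So fix a supporting hyperplane $H=\{x:\langle x,u\rangle=c\}$ with $D_{\ge k}\subseteq H^-=\{\langle x,u\rangle\le c\}$ and with a $(d-1)$-dimensional contact face $F=D_{\ge k}\cap H$. Since $F$ is $(d-1)$-dimensional with outer normal $u$ and the only enclosing halfspace whose normal is exactly $u$ is $H_u^-$, we must have $\ell_u=c$; consequently $H^-$ contains at least $n-k+1$ points of $S$, so the open outer halfspace $H^{++}=\{\langle x,u\rangle>c\}$ contains at most $k-1$ of them. Next I claim at least $d$ affinely independent points of $S$ lie on $H$. If not, $S\cap H$ spans a flat $g$ with $\dim g\le d-2$; pick $q$ in the relative interior of $F$ with $q\notin g$ and rotate $H$ by a tiny angle about $g$, in the direction that moves $q$ to the outer side. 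For a small enough angle no point of $S$ strictly off $H$ crosses the rotated hyperplane, while $S\cap H\subseteq g$ stays on it, so the rotated closed halfspace still contains $(S\cap H^{--})\cup(S\cap H)$ — at least $n-k+1$ points of $S$ — hence still contains $D_{\ge k}$; but $q$ is now strictly outside it, contradicting $q\in D_{\ge k}$.

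For $S$ in general position no hyperplane carries more than $d$ points of $S$, so exactly $d$ points of $S$ lie on $H$, spanning a $(d-1)$-simplex $\sigma$ with $\mathrm{aff}(\sigma)=H$; a local analysis of $D_{\ge k}=\bigcap_v H_v^-$ over directions $v$ near $u$ (where each anchor point lies in $S\cap H$) shows that $F$ is cut out inside $H$ by the facet-hyperplanes of $\sigma$, so $F\subseteq\sigma$. Finally I upgrade "at most $k-1$" to "exactly $k-1$", i.e. I show $|S\cap H^-|=n-k+1$ exactly. Suppose it equals $n-k+1+j$ with $j\ge1$. Let $g_1\subseteq H$ be the $(d-2)$-flat through $d-1$ of the $d$ points of $S\cap H$, and $s^*$ the remaining one; tilt $H$ slightly about $g_1$ so that $s^*$ moves strictly to the outer side. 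For a small enough tilt the new halfspace still encloses the $n-k+j\ge n-k+1$ points of $S$ that are strictly inside $H$ or on $g_1$, hence contains $D_{\ge k}$; but $\sigma$, and therefore $F$, lies on the $s^*$-side of $g_1$, so the tilt pushes the relative interior of $F$ strictly outside the new halfspace — a contradiction. Hence $|S\cap H^{++}|=k-1$.

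The step I expect to be the actual work is making the two small-perturbation arguments (the rotation producing $d$ points on $H$, and the tilt pinning the exact count) airtight: one must check that shrinking/tilting the halfspace never drops its point count below $n-k+1$ and that no point of $S$ is lost or gained on the moving boundary — which is precisely why the fixed points must lie on the rotation axis and the angle must be taken small relative to the distances of the remaining points of $S$ from $H$. A minor caveat is that the "exactly $d$" and "exactly $k-1$" clauses are genuinely general-position statements (with coincident points of $S$ on $H$ only the inequalities survive), but this is harmless here, since the algorithm first strips away any proper affine subspace carrying many points of $S$ and only then invokes the lemma in the general-position regime.
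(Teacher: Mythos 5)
The paper does not actually supply a proof of this lemma (it is dismissed as ``easy to establish''), so your argument has to stand on its own. Its skeleton is the right one: use the characterization of $D_{\ge k}$ as the intersection of all closed halfspaces containing at least $n-k+1$ points of $S$, and then perturb supporting hyperplanes about flats spanned by $S\cap H$. Your rotation argument showing that a facet hyperplane $H$ must contain $d$ affinely independent points of $S$ is correct (modulo the small repair of enlarging the span $g$ of $S\cap H$ to a $(d-2)$-flat inside $H$ avoiding $q$ when $\dim g<d-2$, and translating instead of rotating when $S\cap H=\emptyset$), and so is the final tilt about $g_1$ that pins the outer count at exactly $k-1$, given the two facts it relies on.

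Those two facts are where the genuine gaps are. The larger one: the containment $F\subseteq\sigma$, which is literally half of the statement, is only asserted --- ``a local analysis of $\bigcap_v H_v^-$ over directions $v$ near $u$ shows that $F$ is cut out inside $H$ by the facet-hyperplanes of $\sigma$'' is not a proof, and that intermediate claim itself needs the exact anchoring counts you only establish afterwards. Your own tilting technique closes this gap cleanly: if $x\in F\setminus\sigma$, strictly separate $x$ from $\sigma$ inside $H$ by a $(d-2)$-flat $g_1$ meeting no point of $S$, and tilt $H$ about $g_1$ so that $x$ moves strictly out; for a small tilt every point of $S\cap H^{--}$ stays inside and every point of $S\cap H$ moves strictly inside, so the tilted closed halfspace still contains at least $n-k+1$ points of $S$, hence contains $D_{\ge k}\ni x$, a contradiction. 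The smaller gap is the step ``$\ell_u=c$'': that the facet's supporting closed halfspace $H^-$ contains at least $n-k+1$ points does not follow merely from ``$H_u^-$ is the only enclosing halfspace with normal $u$'' --- for an intersection of infinitely many halfspaces one must argue that a facet's supporting hyperplane is the boundary of a member of the family. Here this is true because $\ell_v$ is continuous in $v$ (an order statistic of finitely many continuous functions): take $q$ in the relative interior of $F$, note that $q+\varepsilon u\notin D_{\ge k}$ is excluded by some $H_{v_\varepsilon}^-$, and let $\varepsilon\to0$ along a convergent subsequence to force $v_\varepsilon\to u$ and $\ell_u=c$; alternatively argue directly from the depth definition using halfspaces witnessing $\td(q+\varepsilon u)<k$.

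Your closing caveat is right that the ``exactly $d$ points on $H$'', ``exactly $k-1$'', and even the single-simplex containment are genuinely general-position statements (for $k=1$, $d=3$, a pyramid over a square base already defeats the literal statement at its bottom facet). But the reason this is harmless is not that the recursion of Section~4 restores general position --- it only removes subspaces containing close to $n$ points, and after it $S$ need not be generic. It is that the paper only ever uses the weaker consequences of the lemma: each facet hyperplane of a non-degenerate $D_{\ge k}$ is spanned by $d$ affinely independent points of $S$ and has at least $n-k+1$ points of $S$ in the closed halfspace containing $D_{\ge k}$ (equivalently, at most $k-1$ strictly outside); and those your rotation and limiting arguments do deliver without any genericity hypothesis.
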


\begin{figure}[htb]
\begin{center}
\begin{picture}(0,0)%
\includegraphics{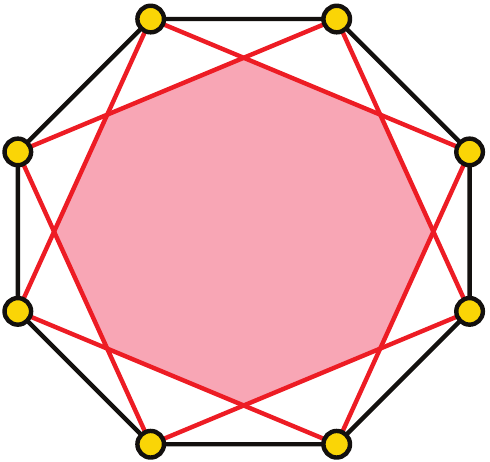}%
\end{picture}%
\setlength{\unitlength}{5594sp}%
\begingroup\makeatletter\ifx\SetFigFont\undefined%
\gdef\SetFigFont#1#2#3#4#5{%
  \reset@font\fontsize{#1}{#2pt}%
  \fontfamily{#3}\fontseries{#4}\fontshape{#5}%
  \selectfont}%
\fi\endgroup%
\begin{picture}(1650,1560)(2011,-1471)
\put(2611,-736){\makebox(0,0)[lb]{\smash{{\SetFigFont{14}{16.8}{\rmdefault}{\mddefault}{\updefault}{\color[rgb]{0,0,0}$D_{\ge 2}$}%
}}}}
\end{picture}%
\caption{The Tukey layers $D_{\ge 2}$ and $D_{\ge 1}$.}
\label{layers}
\end{center}
\end{figure}

    \section{The case of general position}
    \label{sec:base}
     \label{sec:gen}
    \label{sec:no-deg}
		\label{uni-gen}
		
As already said, we apply the exponential mechanism for privately identifying a point in $[0,1]^d$ with (hopefully) large Tukey depth with respect to the input set $S$. This satisfies (pure) differential privacy since the sensitivity of the Tukey depth is $1$. In this section we show that when the input points are in general position, then this application of the exponential mechanism succeeds (with high probability) in identifying a point that has positive Tukey depth, that is, a point inside the convex hull of $S$.

To implement the exponential mechanism (i.e., to sample a point from $[0,1]^d$ appropriately), we need to compute the Tukey regions $D_{\ge k}$ and their volumes. In this section we compute these regions explicitly in $O\left( n^{1+(d-1) \lfloor d/2  \rfloor} \right)$ time. In Section~\ref{sec:nd_time} we will show that the cost can be reduced to $O(n^d)$.

\paragraph*{Computing ${\boldsymbol{D_{\ge k}}}$.}
We pass to the dual space, and construct the
    arrangement $\A(S^*)$ of the hyperplanes dual to the points of $S$.
    A point $h^*$ dual to a hyperplane $h$ supporting $D_{\ge k}$ has at least $n-k+1$
dual hypeplanes passing below $h^*$ or incident to $h^*$, or, alternatively, passing above
$h^*$ or incident to $h^*$. Furthermore, if we move $h^*$ slightly down in the first case
(resp., up in the second case), the number of hypeplanes below (resp., above) it becomes
smaller than $n-k+1$. 

When $h^*$ is a vertex 
 of $\A(S^*)$ we refer to it  as \emph{$k$-critical},
or simply as \emph{critical}. If $D_{\ge k}$ is non-degenerate then, by Lemma \ref{non-deg-D},
    each hyperplane $h$ that supports a facet of $D_{\ge k}$ must be spanned by $d$
affinely independent points of $S$. That is, all these hyperplanes are dual to $k$-critical vertices
of $\A(S^*)$.

    We compute all critical dual vertices
    that have at least $n-k+1$ dual hyperplanes passing below them or incident to them,
or those with at least $n-k+1$ dual hyperplanes passing above them or incident to them.
    The intersection of the appropriate primal halfspaces that are bounded by the hyperplanes
corresponding to these dual vertices 
is $D_{\ge k}$. This gives  an algorithm for constructing $D_{\ge k}$ when it is  non-degenerate.
    Otherwise $D_{\ge k}$ is degenerate and its volume is $0$, and we need additional
techniques, detailed in the next subsection, to handle such situations.

    We compute the volume of each non-degenerate $D_{\ge k}$, for
    $k=1,\ldots, \td_\mx(S)$. We do that in brute force, by computing and triangulating
    $D_{\ge k}$ and adding up the volumes of the simplices in this triangulation.
    Then we subtract the volume of $D_{\ge k+1}$ from the volume of $D_{\ge k}$
to get the volume of $D_{k}$.

\paragraph*{The sampling mechanism.}
    We assign to each $D_k$ the weight $e^{\eps k/2}$
and sample a region $D_k$, for $k=0,\ldots, \td_\mx(S)$, with probability
    \[
    \mu_k = \frac{e^{\eps k/2}{\rm Vol}(D_k)}
    {\sum_{j\ge 0} e^{\eps j/2} {\rm Vol}(D_j)} ,
    \]
    where ${\rm Vol}(D_k)$ denotes the volume of $D_k$.
Then we sample a point uniformly at random from
$D_k$. We do this in brute force by computing $D_k$, triangulating it, computing the volume of
each simplex, drawing a simplex from this triangulation with probability proportional to
its volume, and then drawing a point uniformly at random from the chosen simplex.\footnote{A simple way to implement the last step is to draw uniformly and independently
$d$ points from $[0,1]$, compute the lengths $\lambda_1,\ldots,\lambda_{d+1}$ of the
intervals into which they partition $[0,1]$, and return $\sum_{i=1}^{d+1} \lambda_i v_i$,
where $v_1,\ldots,v_{d+1}$ are the vertices of the simplex.}

This is an instance of the exponential mechanism in which the score (namely the Tukey depth)
has sensitivity $1$, i.e., $|\td_S(q) - \td_{S'}(q)| \le 1$ for any point $q\in [0,1]^d$,
when $S$ and $S'$ differ by only one element. It thus follows from 
the properties of the exponential mechanism (Theorem~\ref{prop:exp_mech})
that this procedure is (purely) $\eps$-differentially private.

\paragraph*{Complexity.}
Computing the dual arrangement $\A(S^*)$ takes $O(n^d)$ time~\cite{ESS}.
Assume that $D_{\ge k}$ is non-degenerate and let $M_k$ denote the number of
hyperplanes defining $D_{\ge k}$ (i.e., the hyperplanes supporting its facets).
It takes $O(M_k^{\lfloor d/2\rfloor})$ time to construct $D_{\ge k}$, as the
intersection of $M_k$ halfspaces, which is a dual version of constructing the
convex hull (see~\cite{Chaz}). Within the same asymptotic bound we can triangulate
$D_{\ge k}$ and compute its volume.
We obviously have $M_k = O(n^d)$, but the number can be somewhat reduced. The following lemma is known. 
\begin{lemma}[Proposition 3 in \cite{Liu19}] \label{lem:Dd-1}
    The number of halfspaces needed to construct $D_{\ge k}$ is $O(n^{d-1})$.
\end{lemma}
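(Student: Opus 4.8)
The plan is to bound the number of facets of the polytope $D_{\ge k}$, together with a supporting halfspace for each; this is all one needs in order to ``construct'' $D_{\ge k}$, because $D_{\ge k}\subseteq D_{\ge 1}=\conv(S)\subseteq[0,1]^d$, so the bounding cube $[0,1]^d$ never contributes a binding constraint for $k\ge 1$. By Lemma~\ref{non-deg-D}, every facet $\phi$ of $D_{\ge k}$ lies on a hyperplane $h_\phi=\operatorname{aff}(\phi)$ that passes through $d$ affinely independent points of $S$, namely the vertices of the simplex $\sigma$ furnished by that lemma. Since a supporting hyperplane of a polytope meets it in a single face, $h_\phi$ and $\phi=h_\phi\cap D_{\ge k}$ determine each other, so it suffices to count facet hyperplanes. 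The idea is to charge each facet hyperplane to a $(d-1)$-element subset of $S$, of which there are only $\binom{n}{d-1}=O(n^{d-1})$, and to show that every such subset is charged at most twice.

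Concretely, for each facet $\phi$ pick an affinely independent $(d-1)$-subset $T_\phi\subseteq S$ of points lying on $h_\phi$ (say, the lexicographically smallest such subset among the $d$ vertices of $\sigma$), and set $f_\phi=\operatorname{aff}(T_\phi)$, a $(d-2)$-flat contained in $h_\phi$. It remains to prove: for any fixed $(d-2)$-flat $f$, at most two hyperplanes $h\supseteq f$ have the property that $h\cap D_{\ge k}$ is a facet of $D_{\ge k}$. To see this, pass to the quotient plane $Q=\reals^d/\vec f\cong\reals^2$, where $\vec f$ is the linear direction of $f$, with linear projection $\pi$. Then $\bar D:=\pi(D_{\ge k})$ is a convex polygon (the image of a full-dimensional $d$-polytope under $\pi$), $f$ maps to a single point $\bar f=\pi(f)$, and every hyperplane $h\supseteq f$ maps to a line $\bar h=\pi(h)$ through $\bar f$ with $h=\pi^{-1}(\bar h)$. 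If $h$ supports $D_{\ge k}$ then $\bar h$ supports $\bar D$, and one checks that $\bar h\cap\bar D=\pi(h\cap D_{\ge k})$; when $h\cap D_{\ge k}$ is a facet (so $(d-1)$-dimensional, and $\vec f$ lies in its affine span), this image is $1$-dimensional, hence is a full edge of $\bar D$. Thus $\bar h$ must be the line spanned by some edge of the polygon $\bar D$ and pass through the point $\bar f$. But a convex polygon has at most two edges whose supporting line passes through any prescribed point of the plane (none if the point is interior to the polygon, one if it lies in the relative interior of an edge, and at most two — from the two incident edges, resp.\ the two tangent lines — if it is a vertex of, resp.\ exterior to, the polygon). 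This establishes the claim, and summing over the $\binom{n}{d-1}$ possible values of $T_\phi$ yields $O(n^{d-1})$ facets, as required.

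I expect the main technical point to be the reduction to the planar quotient together with the case analysis of how many edge-lines of a convex polygon can pass through a single point — in particular the borderline cases where $\bar f$ is a vertex of $\bar D$, or where a tangent line from an exterior $\bar f$ touches $\bar D$ only at a vertex and hence contributes no edge, all of which must be checked to be sure the bound of two is never exceeded. A secondary bookkeeping point is to keep the charge $\phi\mapsto T_\phi$ well defined independently of $\phi$: here one uses Lemma~\ref{non-deg-D} to guarantee that every facet hyperplane carries an affinely independent $(d-1)$-subset of $S$, so that $f_\phi$ is genuinely $(d-2)$-dimensional and the quotient argument applies verbatim; in general position this is automatic. (In the degenerate regime the same count can be carried out inside $\operatorname{aff}(D_{\ge k})$ by induction on the dimension.)
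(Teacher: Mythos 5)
Your proof is correct and takes essentially the same route as the paper: both charge the relevant bounding hyperplanes to $(d-1)$-element subsets of $S$ (via Lemma~\ref{non-deg-D}) and observe that among halfspaces whose boundaries all contain the $(d-2)$-flat spanned by such a subset, at most two can contribute. Your planar-quotient argument is just a worked-out version of the step the paper labels ``easy to check,'' namely that concurrent halfspaces are determined by two of them.
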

\begin{proof}
Fix a $(d-1)$-tuple $\sigma$ of points of $S$,
and consider all the relevant (closed) halfspaces, each of which is
bounded by a hyperplane that is spanned by $\sigma$ and another point of $S$,
and contains at least $n-k+1$ points of $S$. It is easy to check
that, as long as the intersection of these halfspaces is full-dimensional, it is equal
to the intersection of just two of them.
\end{proof}

Summing up, we get that computing the volume of all the non-degenerate regions $D_{\ge k}$,
for $k=1,\ldots, \td_\mx(S)$, takes
$O\left( \sum_{k\ge 1} M_k^{\lfloor d/2\rfloor} \right) =
O\left( n^{1+(d-1) \lfloor d/2  \rfloor} \right)$
time. 

\paragraph*{Utility.}
    We continue to assume that $D_{\ge k}$ is non-degenerate, and give a lower bound on its volume.
    By Lemma \ref{lem:non-deg-D}, such a $D_{\ge k}$ is the intersection of
    halfspaces, each bounded by a hyperplane that is spanned by
    $d$ points of $S$. Denote by $H$ the set of these hyperplanes.
    To obtain the lower bound, it suffices to consider the case where $D_{\ge k}$
is a simplex, each of whose vertices is the intersection point of $d$ hyperplanes of $H$.

    The equation of a hyperplane $h$ that passes through $d$ points,
    $a_1,\ldots,a_d$, of $S$ is
\[
    \begin{vmatrix}
    1 & x_1 & \cdots & x_d \\
    1 & a_{1,1} & \cdots & a_{1,d} \\
    & & \vdots & \\
    1 & a_{d,1} & \cdots & a_{d,d}
    \end{vmatrix} = 0 ,
\]
    where $a_i = (a_{i,1},\ldots,a_{i,d})$, for $i=1,\ldots,d$.
    The coefficients of the $x_i$'s in the equation of $h$ are
    $d\times d$ subdeterminants of this determinant, where each determinant has
    one column of $1$'s, and $d-1$ other columns,
    each of whose entries is some $a_{i,j}$, which is a rational of the
    form $m/X$, with $0\le m\le X$ (the same holds for the $1$'s, with $m=X$).
    The value of such a determinat
    (coefficient) is a rational number with denominator $X^{d}$.
    By Hadamard's inequality, its absolute value is at most the product
    of the Euclidean norms of its rows, which is at most $d^{d/2}$.
    That is, the numerator of the determinant is an integer of absolute value at most $d^{d/2}X^{d}$.
    The free term is a similar sub-determinant, but all its entries are the
    $a_{i,j}$'s, so it too is a rational with denominator $X^d$, and with
    numerator of absolute value at most $d^{d/2}X^d$. Multiplying by $X^{d}$, all the
    coefficients become integers of absolute value at most $d^{d/2}X^d$.

    Each vertex of any region $D_k$ of Tukey depth $k$, for any $k$,
    is a solution of a linear system of $d$ hyperplane equations
    of the above form. It is therefore a rational number whose denominator,
    by Cramer's rule, is the determinant of all non-free coefficients
    of the $d$ hyperplanes. This is an integer whose absolute value,
    again by Hadamard's inequality, is at most
\[
    \left( \sqrt{d} d^{d/2}X^{d} \right)^d \le d^{d(d+1)/2} X^{d^2} .
\]
    Since the free terms are also integers, we conclude that the
    coordinates of the intersection point are rationals with a common
    integral denominator of absolute value at most $d^{d(d+1)/2}X^{d^2}$.

    We can finally obtain a lower bound for the nonzero volume of a simplex spanned by
    any $d+1$ linearly independent intersection points $v_1,\ldots,v_{d+1}$.
    This volume is
\[
    \frac{1}{d!}
    \begin{vmatrix}
    1 & v_{1,1} & \cdots & v_{1,d} \\
    1 & v_{2,1} & \cdots & v_{2,d} \\
    & & \vdots & \\
    1 & v_{d+1,1} & \cdots & v_{d+1,d}
    \end{vmatrix},
    \]
        where again $v_i = (v_{i,1},\ldots,v_{i,d})$, for $i=1,\ldots,d+1$.
    Note that all the entries in any fixed row have the same denominator.
    The volume is therefore a rational number whose denominator is $d!$ times
    the product of these denominators, which is thus an integer with
    absolute value at most
\[
    d! \cdot \left( d^{d(d+1)/2}X^{d^2} \right)^d \le (dX)^{d^3}
\]
    (for $d\ge 2$).
    That is, we get
    the following lemma.
    \begin{lemma}
        If the volume of $D_{\ge k}$ is not zero then it is at least $1/(dX)^{d^3}$.
    \end{lemma}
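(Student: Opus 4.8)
The plan is to turn the statement into an arithmetic estimate on determinants. Since $\vol(D_{\ge k})\neq 0$, the region $D_{\ge k}$ is full-dimensional, so by Lemma~\ref{lem:non-deg-D} it is a convex polytope each of whose facets lies on a hyperplane spanned by $d$ affinely independent points of $S$. A full-dimensional polytope contains a full-dimensional simplex whose $d+1$ vertices are vertices of the polytope, and each such vertex is the intersection of $d$ of these spanning hyperplanes; hence it suffices to lower bound the volume of a simplex on $d+1$ such vertices $v_1,\dots,v_{d+1}$.

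First I would control the coefficients of a spanning hyperplane. Writing the hyperplane through grid points $a_1,\dots,a_d$ as the vanishing of the $(d+1)\times(d+1)$ determinant with first row $(1,x_1,\dots,x_d)$ and remaining rows $(1,a_{i,1},\dots,a_{i,d})$, each coefficient of an $x_i$, as well as the free term, is a $d\times d$ minor whose entries are rationals $m/X$ with $0\le m\le X$. Multiplying by $X^d$ clears the denominator, and Hadamard's inequality bounds each resulting integer by $d^{d/2}X^d$. Next, a vertex of $D_{\ge k}$ solves a system of $d$ such equations, so by Cramer's rule each of its coordinates is a quotient of $d\times d$ determinants formed from these integer coefficients; a second application of Hadamard bounds the common integral denominator of the vertex coordinates by $\left(\sqrt d\, d^{d/2}X^d\right)^d\le d^{d(d+1)/2}X^{d^2}$. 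Finally, $\vol(\conv\{v_1,\dots,v_{d+1}\})$ equals $\frac{1}{d!}$ times the $(d+1)\times(d+1)$ determinant whose $i$-th row is $(1,v_{i,1},\dots,v_{i,d})$; all entries of a fixed row share a common denominator (that of $v_i$), so clearing denominators shows the volume is a positive rational whose denominator is an integer of absolute value at most $d!\cdot\left(d^{d(d+1)/2}X^{d^2}\right)^d\le (dX)^{d^3}$ for $d\ge 2$. A nonzero rational with such a denominator is at least $1/(dX)^{d^3}$, which is the claim.

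The work here is essentially bookkeeping, and the one place that needs care is the composition of the two nested applications of Cramer's rule and Hadamard's inequality: one must apply Hadamard to the correct matrices — the matrix of non-free hyperplane coefficients when bounding the vertex denominators, and the $(d+1)\times(d+1)$ ``volume'' matrix at the end — with the correct per-entry bounds, so that the crude estimates compose exactly into the clean form $(dX)^{d^3}$. A minor point to verify is that a full-dimensional $D_{\ge k}$ really does contain a full-dimensional simplex on its own vertices and that each such vertex is a common intersection point of $d$ of the spanning hyperplanes; both follow from Lemma~\ref{lem:non-deg-D}.
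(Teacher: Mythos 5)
Your proposal is correct and follows essentially the same route as the paper: reduce to a simplex on $d+1$ vertices of $D_{\ge k}$, bound the integer hyperplane coefficients via Hadamard after clearing the $1/X$ denominators, bound the common denominator of each vertex via Cramer's rule and a second application of Hadamard, and then clear row denominators in the $\frac{1}{d!}$ volume determinant to get the bound $(dX)^{d^3}$. The intermediate bounds $d^{d/2}X^d$, $d^{d(d+1)/2}X^{d^2}$, and $d!\cdot\left(d^{d(d+1)/2}X^{d^2}\right)^d\le (dX)^{d^3}$ match the paper's exactly.
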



    Assume that the volume of $D_{\ge k}$ is not zero for $k=k_0:=n/(4d)$.
    Since the score of a point outside the convex hull is zero and the volume of $D_{\ge 0}$
is at most $1$, we get that the probability to sample a point outside of the convex hull is at most
    \[
    \frac{1}{e^{\eps k_0}{\rm Vol}(D_{k_0})}
    \le \frac{{(dX)^{d^3}}}{e^{\eps n/(4d)}}.
    \]
    This inequality leads to
    the following theorem, which summarizes the utility that our instance of the exponential mechanism provides.
    \begin{theorem} \label{thm:exp-mechanism}
    If ${\displaystyle n \ge \frac{4d^4\log (dX)}{\eps} + \frac{4d}{\eps}\log\frac{1}{\beta}}$
    and $D_{\ge n/4d}$ has non-zero volume then the exponential mechanism, implemented as above,
returns a point in the convex hull with probability at least $1-\beta$, in $O\left( n^{1+(d-1) \lfloor d/2  \rfloor} \right)$ time.
    \end{theorem}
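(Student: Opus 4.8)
The plan is to combine the volume lower bound of the preceding lemma with the classical centerpoint estimate to control the failure probability of the sampling mechanism, and then to add up the running time of the explicit construction of the Tukey regions.

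First, for the utility claim I would note that, by Helly's theorem (see \cite{Matousek:2002}), there is always a point of Tukey depth at least $n/(d+1)\ge n/(4d)=:k_0$, so $D_{\ge k_0}(S)$ is nonempty and, by hypothesis, has positive volume. I would then take $\ell\ge k_0$ to be the largest index with $\vol(D_{\ge\ell})>0$; since the regions $D_{\ge 0}\supseteq D_{\ge 1}\supseteq\cdots$ are nested we get $\vol(D_\ell)=\vol(D_{\ge\ell})-\vol(D_{\ge\ell+1})=\vol(D_{\ge\ell})>0$, and the preceding lemma gives $\vol(D_\ell)\ge 1/(dX)^{d^3}$. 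The mechanism returns a point outside $\conv(S)$ exactly when it samples from $D_0$ (the points of Tukey depth $0$), an event of probability $\mu_0$; bounding its numerator by $\vol(D_0)\le\vol([0,1]^d)=1$ and its denominator from below by the single summand corresponding to $\ell$, I would obtain
\[
\mu_0\ \le\ \frac{1}{e^{\eps\ell/2}\vol(D_\ell)}\ \le\ (dX)^{d^3}\,e^{-\eps k_0/2} ,
\]
and substituting $k_0=n/(4d)$ together with the assumed lower bound on $n$ would push this below $\beta$ (tracking the constants consistently with the exponential-mechanism weights; the stated bound on $n$ is calibrated to an $e^{\eps k_0}$ in the denominator). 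Privacy is not part of the claim, but is immediate: this is the exponential mechanism applied to the sensitivity-$1$ score $\td_S(\cdot)$, hence $\eps$-differentially private by Theorem~\ref{prop:exp_mech}.

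For the running time, I would charge $O(n^d)$ for building the dual arrangement $\A(S^*)$, and then, for each of the $O(n)$ depths $k=1,\dots,\td_\mx(S)$, $O(M_k^{\lfloor d/2\rfloor})=O(n^{(d-1)\lfloor d/2\rfloor})$ time for locating the $k$-critical dual vertices and intersecting the corresponding $M_k=O(n^{d-1})$ primal halfspaces (Lemma~\ref{lem:Dd-1}), with triangulation and volume computation of $D_{\ge k}$ within the same bound; the per-layer sampling (draw a simplex with probability proportional to its volume, then a uniform point in it) is dominated by this. Summing over the $O(n)$ layers yields the claimed $O(n^{1+(d-1)\lfloor d/2\rfloor})$.

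The step I expect to need the most care is isolating a \emph{full-dimensional} layer of depth at least $k_0$: $D_{k_0}$ itself may be degenerate even when $D_{\ge k_0}$ is not, so the volume lower bound cannot be applied to $D_{k_0}$ directly; passing instead to the deepest non-degenerate region $D_{\ge\ell}$ and using the nesting $D_{\ge\ell}\supseteq D_{\ge\ell+1}$ is what makes the denominator estimate go through. The remaining inequalities are routine arithmetic; the case in which $D_{\ge n/4d}$ has zero volume is the genuinely difficult one and is deferred to later sections, and the argument tacitly assumes $d\ge 2$, which is where the bound $(dX)^{d^3}$ of the preceding lemma holds.
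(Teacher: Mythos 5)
Your proposal is correct and follows essentially the same route as the paper: the $1/(dX)^{d^3}$ volume lower bound feeding into a single-summand lower bound on the exponential-mechanism normalizer with $k_0=n/(4d)$, plus the same per-level $O(M_k^{\lfloor d/2\rfloor})$ accounting with $M_k=O(n^{d-1})$ for the running time. In fact your two points of extra care—passing to the deepest level $\ell\ge k_0$ with $\vol(D_{\ge\ell})>0$ so that $\vol(D_\ell)=\vol(D_{\ge\ell})$ (the paper bounds via $\vol(D_{k_0})$ directly), and noting that the stated bound on $n$ is calibrated to $e^{\eps k_0}$ while the sampling weights are $e^{\eps k/2}$—address small imprecisions in the paper's own write-up rather than gaps in your argument.
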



    \section{Handling degeneracies} \label{uni-deg}
    In general we have no guarantee that $D_{\ge n/4d}$ has non-zero volume.
    In this section we show how to overcome this and compute (with high probability) a point in the convex hull of any input.
 We rely on the
    following lemma, which shows that if
    $D_{\ge k}$ has zero volume then many points of $S$ are in a lower-dimensional affine subspace.
    \begin{lemma} \label{lem:no-vol} \label{flat}
    If $D_{\ge k}$ spans an affine subspace $f$ of dimension $j$ then
\[
|S\cap f| \ge n-(d-j+1)(k-1) .
\]
    \end{lemma}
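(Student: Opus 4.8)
The plan is to argue by contradiction on the dimension of the affine hull $f$ of $D_{\ge k}$. Suppose $\dim f = j < d$ (if $j = d$ the claimed bound $|S \cap f| \ge n - (k-1)$ should actually be established separately, but the interesting case is $j < d$, where we must show $|S \cap f| \ge n - (d-j+1)(k-1)$). Recall from the preliminaries that $D_{\ge k}$ is the intersection of all closed halfspaces containing at least $n - k + 1$ points of $S$; equivalently, a point $q$ has $\td_S(q) \ge k$ iff every closed halfspace through $q$ contains at least $k$ points of $S$. The key object to introduce is the collection of hyperplanes ``cutting off'' the complement of $f$: since $D_{\ge k} \subseteq f$ and $f$ has codimension $d - j$, we can write $f$ as the intersection of $d - j$ hyperplanes, and more usefully, $f$ (as the affine hull of $D_{\ge k}$) is contained in the boundary of $D_{\ge k}$ in a strong sense — every facet-supporting halfspace of $D_{\ge k}$ that is not ``parallel'' to $f$ actually has $f$ on its boundary or cuts through.

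The cleaner route: pick a point $q$ in the relative interior of $D_{\ge k}$ inside $f$. Consider the orthogonal complement directions to $f$. For each of the $2(d-j)$ directions $\pm e$ orthogonal to $f$ (choosing an orthonormal basis $e_1, \dots, e_{d-j}$ of $f^\perp$), the point $q$ sits on the boundary of $D_{\ge k}$ when we move in direction $e_i$, because $D_{\ge k} \subseteq f$. Hence there is a supporting halfspace $H_i^+$ of $D_{\ge k}$ at $q$ with outward normal $e_i$, and another $H_i^-$ with normal $-e_i$; each contains at most $k - 1$ points of $S$ in its open side (since pushing the bounding hyperplane outward past $q$ would contradict $q$ being the ``last'' depth-$k$ point in that direction — more carefully, the open halfspace strictly beyond the supporting hyperplane, on the side away from $D_{\ge k}$, contains at most $k-1$ points of $S$, for otherwise a slightly shifted point would still have depth $\ge k$, contradicting $D_{\ge k} \subseteq f$). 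Now the hyperplanes $\partial H_i^+, \partial H_i^-$ all pass through $q$; in fact, because $D_{\ge k} \subseteq f$ and $q$ is relative-interior to it, one can take these $2(d-j)$ supporting hyperplanes to all contain $f$ itself. Then the points of $S$ not on $f$ are distributed among the $2(d-j)$ open halfspaces; but these open halfspaces can be grouped into $d - j$ complementary pairs, and within each pair $(H_i^+, H_i^-)$ the two open sides are \emph{disjoint} and together cover everything off the hyperplane $\partial H_i^+ = \partial H_i^- \supseteq f$. Summing, $|S \setminus f| = \sum_{x \in S \setminus f} 1$, and each such $x$ lies strictly on one side of each of the $d-j$ hyperplanes $\partial H_i^\pm$; counting via the hyperplane $\partial H_i$ that contains $f$, the two open halfspaces it bounds together contain at most $2(k-1)$ points of $S$ lying off that hyperplane — wait, this needs care.

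Let me restructure the counting as the step I actually expect to be delicate. The right bookkeeping: choose $d - j$ hyperplanes $g_1, \dots, g_{d-j}$, each containing $f$, with $f = \bigcap_i g_i$, such that each $g_i$ is a supporting hyperplane of $D_{\ge k}$ — this is possible since $D_{\ge k}$ lies in $f$, which is the intersection of these hyperplanes, so each $g_i \supseteq D_{\ge k}$ trivially supports it. For each $g_i$, both open halfspaces it bounds are "exterior" to $D_{\ge k}$; I claim each such open halfspace contains at most $k - 1$ points of $S$. Indeed, if an open halfspace $g_i^+$ bounded by $g_i$ contained $\ge k$ points of $S$, then $g_i^-$ (the other closed halfspace) contains $\le n - k$ points, so $D_{\ge k}$ is \emph{not} contained in $g_i^-$... hmm, that's the wrong direction. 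Use instead: a closed halfspace with $\le k - 1$ points of $S$ cannot meet $D_{\ge k}$ at all, and $D_{\ge k} \subseteq g_i$, so the issue is whether we can rotate $g_i$ slightly to peel off points. Here is the clean claim to prove: \emph{for each $i$, the closed halfspace $g_i^+$ contains at least $n - (k-1)$ points of $S$, and likewise $g_i^-$}; equivalently each open complement has $\le k - 1$ points; this follows because $D_{\ge k} \subseteq g_i$ together with the fact that $g_i$ can be taken tangent to $D_{\ge k}$, so a perturbation argument (move $g_i$ parallel to itself toward $D_{\ge k}$, or rotate about an axis in $f$) shows that if more than $k-1$ points lay strictly on one side, the swept region would still have Tukey depth $\ge k$ yet lie outside $f$. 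Granting this, the points of $S$ off $f$: each $x \in S \setminus f$ lies off at least one $g_i$ (since $f = \bigcap g_i$), hence in the open complement of that $g_i$, so by a union bound $|S \setminus f| \le \sum_{i=1}^{d-j} |(\text{open complement of } g_i) \cap S| \le 2(d-j)(k-1)$. This gives $|S \cap f| \ge n - 2(d-j)(k-1)$, which is weaker than the claimed $n - (d-j+1)(k-1)$. To sharpen to the stated bound, one must instead pick the $g_i$ more cleverly — e.g., realize $D_{\ge k}$'s affine span as cut out by $d - j + 1$ supporting halfspaces forming a "wedge" (a simplicial cone complement), using Lemma~\ref{lem:non-deg-D}-style structure within the relative arrangement, so that the off-$f$ points are covered by $d - j + 1$ open halfspaces each carrying $\le k - 1$ points: the affine flat of codimension $m := d - j$ arises generically as the intersection of $m + 1$ facet hyperplanes of $D_{\ge k}$ in "general position transverse to $f$", analogous to how $m+1$ facets of a simplex meet in a point. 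The hard part is precisely this combinatorial-geometric step: showing $f$ is the common intersection of exactly $d - j + 1$ (not more) supporting hyperplanes of $D_{\ge k}$, each of whose "outer" open halfspace contains at most $k-1$ points of $S$, so that the union bound yields the tight coefficient $(d - j + 1)(k-1)$. Everything else — the perturbation argument bounding each open halfspace by $k-1$ points, and the final union bound — is routine.
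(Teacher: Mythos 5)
Your proposal has a genuine gap, and you in fact point at it yourself: everything you actually argue yields only $|S\cap f|\ge n-2(d-j)(k-1)$, and the step that would give the stated coefficient --- exhibiting $d-j+1$ halfspaces, each of whose open complements contains at most $k-1$ points of $S$, whose common intersection is exactly $f$ --- is deferred as ``the hard part'' and never carried out. That step \emph{is} the substance of the lemma, and it is exactly what the paper's proof supplies: starting from the characterization of $D_{\ge k}$ as the intersection of \emph{all} closed halfspaces containing at least $n-k+1$ points of $S$ (so each open complement contains at most $k-1$ points by definition --- no perturbation argument is needed), the paper observes that the halfspaces of this family whose bounding hyperplanes properly cross $f$ cut out inside $f$ a set of positive relative volume (because $D_{\ge k}$ spans $f$), hence their intersection has positive $d$-dimensional volume and cannot confine $D_{\ge k}$ to $f$; therefore the confinement must come from $d-j+1$ halfspaces of the family whose intersection is $f$, and the union of their complements covers $\reals^d\setminus f$ and contains at most $(d-j+1)(k-1)$ points.

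Moreover, the intermediate ``clean claim'' you rely on --- that for \emph{any} hyperplane $g$ containing $f$, each open halfspace bounded by $g$ contains at most $k-1$ points of $S$ --- is not established by your sketched perturbation/rotation argument, and it is false in general. Take $d=2$, $S$ the four corners of a square, $k=2$: then $D_{\ge 2}$ is the single center point, so $f$ is that point, yet the horizontal line through the center contains $f$ and has $k=2$ points of $S$ strictly above it. The $\le k-1$ property is automatic only for halfspaces drawn from the defining family, which is why the paper works with that family from the outset rather than with arbitrarily chosen hyperplanes through $f$. (The same four-point example shows that the counting step you flagged is genuinely delicate: there the family needs $d-j+2$ halfspaces to pin down $f$, so the exact coefficient $(d-j+1)(k-1)$ fails in this edge case, and the paper itself immediately passes to the weaker bound $n-(d-j+1)k$ for its application. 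So your instinct about where the difficulty lies is sound, but flagging the step as hard and omitting it, together with the false per-hyperplane claim, leaves the statement unproved.)
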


    \begin{proof}
    Recall that $D_{\ge k}$ is the intersection of all closed halfspaces $h$ that contain
at least $n-k+1$ points of $S$.
Note that a halfspace that bounds $D_{\ge k}$ and whose bounding
hyperplane properly crosses $f$, defines a proper halfspace within $f$, and, by assumption, the
intersection of these halfspaces has positive relative volume. This means that the intersection
of these halfspaces in $\reals^d$ has positive volume too, and thus cannot confine $D_{\ge k}$ to $f$.
To get this confinement, there must exist (at least) $d-j+1$ halfspaces in the above collection,
whose intersection is $f$. Hence the union of their complements is $\reals^d\setminus f$.
    Since this union contains at most $(d-j+1)(k-1)$ points of $S$, the claim follows.
    \end{proof}

In what follows, to simplify the expressions that we manipulate, we use the weaker lower bound
$n-(d-j+1)k$.
    In order for the lemma to be meaningful, we want $k$ to be significantly smaller than
    the centerpoint bound $n/(d+1)$, so we set, as above, $k = n/(4d)$.

    We use Lemma \ref{lem:no-vol} to handle degenerate inputs $S$, using the following high-level approach.
    We first (privately) check whether there exists an affine proper subspace that
    contains many points of $S$. If we find such a subspace $f$, we recursively
    continue the procedure within $f$, with respect to $S\cap f$.
    Lemma~\ref{flat} then implies that we do not lose too many points when we recurse
    within $f$ (that is, $|S\cap f|$ is large), using our choice $k= n/(4d)$.
    Otherwise, if no such subspace exists, Lemma \ref{lem:no-vol} implies that $D_{\ge k}$
is full-dimensional (with respect to the subspace into which we have recursed so far),
so we can apply the exponential mechanism, as implemented in Section \ref{sec:no-deg}, and
obtain a point that lies, with high probability, in the convex hull of the surviving
subset of $S$, and thus of the full set $S$. We refer to this application of
    the exponential mechanism in the appropriate affine subspace as the
    \emph{base case} of the recursion.

    The points of $S\cap f$ are not on a standard grid within $f$. (They lie of course in the
    standard uniform grid $G$ of side length $1/X$ within the full-dimensional cube, but $G\cap f$ is not a
    standard grid and in general has a different, coarser resolution.) 
    We overcome this issue by noting that
    there always exist $j$ coordinates,
which, without loss of generality, we assume to be $x_1,\ldots,x_j$, such that $f$ can be expressed
in parametric form by these coordinates. We then
    project $f$ (and $S\cap f$) onto the $x_1x_2\cdots x_j$-coordinate subspace $f'$.
    We recurse within $f'$, where the projected points of $S\cap f$ do lie in a standard grid
(a cross-section of $G$), and then lift the output point $x_0'$,
    which lies, with high probability, in $\conv(S_0')$, back to a point $x'\in f$.
    It is straightforward to verify that if $x_0'$ is in
    the convex hull of the projected points then $x'$ is in the convex hull of $S\cap f$.

    \subsection{Finding an affine subspace with many points privately}
    \label{sec:details}

    For every affine subspace $f$, of dimension $0\le j\le d-1$, spanned by some subset
    of (at least) $j+1$ points of $G$, we denote by $c(f)$ the number of points
    of $S$ that $f$ contains, and refer to it as the \emph{size} of $f$.

    We start by computing $c(f)$ for every subspace $f$ spanned by points of $S$,
    as follows. We construct the (potentially degenerate) arrangement $\A(S^*)$ of the set $S^*$ of the
hyperplanes dual to the points of $S$. During this construction,
    we also compute the multiplicity of each flat in this arrangement, namely,
    the number of dual hyperplanes that contain it. Each intersection flat of
    the hyperplanes is dual to an affine subspace $f$ defined by the corresponding
    subset of the primal points of $S$ (that it contains), and $c(f)$ is the number of dual
    hyperplanes containing the flat. In other words, as a standard byproduct of
the construction of $\A(S^*)$, we can compute the sizes of all the
    affine subspaces that are spanned by points of $S$, in $O(n^d)$ overall time.

    We define
    \[
    M_i = M_i(S) = \max \{c(f) \mid  \text{ $f$ is spanned by a subset of $S$ and is of dimension } \allowbreak \text{{\em at most} $i$}\},
    \] and compute $M'_i = M_i + Y_i$, where $Y_i$ is a random variable drawn (independently
    for each $i$) from a Laplace distribution with parameter $b := \frac{1}{\eps}$
    centered at the origin. (That is, the probability density function of $Y_i$ is
    $\frac{1}{2b}e^{-|x|/b} = \frac{\eps}{2}e^{-\eps|x|}$.)

    Our algorithm now uses a given confidence parameter $\beta\in (0,1)$ and proceeds as follows.
    If for every $j=0,\ldots,d-1$
\begin{equation}
    M'_j \le  n - (d-j+1)k - \frac{1}{\eps}\log\frac{2}{\beta} , \label{eq:rul1}
\end{equation}
    we apply the base case.
    Otherwise, set $j$ to be the smallest index such that
\begin{equation}
    M'_j >  n - (d-j+1)k - \frac{1}{\eps}\log\frac{2}{\beta} .
\label{eq:rul2}
\end{equation}

    Having fixed $j$, we find (privately) a subspace $f$ of dimension $j$
that contains a large number of points of $S$. To do so, let $Z_j$ be the collection
of all $j$-dimensional subspaces that are spanned by $j+1$ affinely independent
points of the grid $G$ (not necessarily of $S$). We apply the exponential mechanism
to pick an element of $Z_j$, by
    assigning a {\em score} $s(f)$ to each subspace of $Z_j$, which we set to be
    \[
    s(f) = \max \left\{0, c(f) - M_{j-1}  \right\} \ ,
    \]
    if $j \ge 1$, and $s(f)=c(f)$ if $j=0$.
    Note that by its definition, $s(f)$ is zero if $f$ is not spanned by points of $S$.
Indeed, in this case the $c(f)$ points contained in $f$ span some subspace of dimension
$\ell \le j-1$ and therefore $M_{j-1}$ must be at least as large as $c(f)$. We will shortly argue that
$s(f)$ has sensitivity at most $2$ (Lemma \ref{lem:sen-s}), and thus conclude that this step preserves the differential
privacy of the procedure.

    We would like to apply the exponential mechanism as stated above in time proportional
to the number of subspaces of non-zero score, because this number depends only on $n$
(albeit being exponential in $d$) and not on (the much larger) $X$. However, to normalize
the scores to probabilities, we need to know the number of elements of $Z_j$ with zero score,
or alternatively to obtain
the total number of subspaces spanned by $j+1$ points of $G$ (that is, the size of $Z_j$).

    We do not have a simple expression for $|Z_j|$
(although this is a quantity that can be computed, for each $j$, independently of $S$, once and for all),
    but clearly $|Z_j|\le X^{d{(j+1)}}$.
    We thus augment $Z_j$ (only for the purpose of analysis)
with $X^{d{(j+1)}}-|Z_j|$ ``dummy'' subspaces, and denote the augmented set by $Z_j'$, whose
cardinality is now exactly $X^{d{(j+1)}}$.
    We draw a subspace $f$ from $Z_j'$ using the exponential mechanism.
To do so we need to compute, for each score $s\ge 0$, the number $N_s$ of elements of
$Z'_j$ that have score $s$, give each such element weight $e^{\eps s/4}$, choose the index
$s$ with probability proportional to $N_s e^{\eps s/4}$, and then choose  uniformly a subspace from
 those with score $s$. It is easy to check that this is indeed an implementation of the exponential mechanism as described in Section \ref{sec:exp}.
 
    If the drawing procedure decides to pick a subspace that is not spanned by points of $S$,
or more precisely decides to pick a subspace of score $0$, we stop the whole procedure, with a failure.
If the selected score is positive, the subspace to be picked is spanned by $j+1$ points of $S$,
and those subspaces are available (from the dual arrangement construction outlined above). We thus obtain
a selected subspace $f$ (by randomly choosing an element from the available list of these subspaces), and apply the algorithm  recursively within $f$, restricting the input to $S\cap f$. (Strictly speaking, as noted above, we apply the algorithm to a projection of $f$ onto a suitable
$j$-dimensional coordinate subspace.)

    It follows that we can implement the exponential mechanism on all subspaces in $Z_j'$
in time proportional to the number of subspaces spanned by points of $S$, which is $O(n^d)$, and therefore
the running time of this subspace selection procedure (in each recursive call) is $O(n^d)$.

\subsubsection{Privacy analysis}

    \begin{lemma} \label{lem:senM}
        Let $S_1= S_0 \cup \{x\}$ and $S_2=S_0 \cup \{y\}$ be two neighboring data sets.
        Then, for every $i = 0,\ldots,d-1$, we have $|M_i(S_1)-M_i(S_2)|\le 1$.
    \end{lemma}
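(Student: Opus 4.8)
The plan is to prove the two one-sided inequalities $M_i(S_1)\le M_i(S_2)+1$ and $M_i(S_2)\le M_i(S_1)+1$ separately; by the symmetry between $x$ and $y$ it suffices to establish the first one. Write $S_1=S_0\cup\{x\}$ and $S_2=S_0\cup\{y\}$, and let $f^\ast$ be an affine subspace of dimension at most $i$, spanned by a subset of $S_1$, that attains the maximum, so that $M_i(S_1)=c_{S_1}(f^\ast)=|S_1\cap f^\ast|$. The subtle point — and the only real content of the lemma — is that $f^\ast$ need not be spanned by a subset of $S_2$: it may genuinely require the point $x$ in every spanning subset drawn from $S_1$, and then it is not itself an admissible candidate in the maximum that defines $M_i(S_2)$. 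I would fix this by passing to a slightly smaller subspace that is spanned by a subset of $S_0\subseteq S_2$ and loses at most one point.

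First I would record the trivial pointwise observation that underlies everything: for \emph{any} affine subspace $g$ we have $|S_1\cap g|=|S_0\cap g|+\mathbf{1}[x\in g]$ and $|S_2\cap g|=|S_0\cap g|+\mathbf{1}[y\in g]$, so the two counts differ by at most $1$ on every fixed $g$. Thus all the work is in matching up the two families of subspaces over which the maxima are taken.

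Then comes the main step. If $S_0\cap f^\ast=\emptyset$, then since $f^\ast$ is spanned by points of $S_1\cap f^\ast\subseteq\{x\}$, it must be the single point $\{x\}$, whence $M_i(S_1)=1$ and the inequality holds trivially (indeed $M_i(S_2)\ge 1$). Otherwise set $g:=\operatorname{aff}(S_0\cap f^\ast)$. Then: (i) $g\subseteq f^\ast$, so $\dim g\le\dim f^\ast\le i$; (ii) $g$ is spanned by an affinely independent subset of $S_0\cap f^\ast\subseteq S_0\subseteq S_2$, so $g$ is admissible in the definition of $M_i(S_2)$; and (iii) $S_0\cap f^\ast\subseteq g$ and $S_0\subseteq S_2$, hence $|S_2\cap g|\ge|S_0\cap f^\ast|=|S_1\cap f^\ast|-\mathbf{1}[x\in f^\ast]\ge M_i(S_1)-1$. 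Combining, $M_i(S_2)\ge c_{S_2}(g)=|S_2\cap g|\ge M_i(S_1)-1$. Exchanging the roles of $x$ and $y$ gives $M_i(S_1)\ge M_i(S_2)-1$, and the lemma follows.

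I expect the only delicate point — mild, but the crux of the argument — to be item (ii): after deleting the spanning point $x$ one must check that the replacement $g=\operatorname{aff}(S_0\cap f^\ast)$ is genuinely \emph{spanned} by a subset of $S_2$ (not merely contained in the affine hull of $S_2$) while its dimension still does not exceed $i$. This is exactly why one should take the affine hull of the \emph{surviving} points of $S_0$ rather than keep $f^\ast$ itself, and it is where the definition of $M_i$ as a maximum over subspaces spanned by subsets of the data set is used essentially.
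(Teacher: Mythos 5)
Your proof is correct and follows essentially the same route as the paper's: take an optimal subspace for $M_i(S_1)$ and, when it relies on $x$, replace it by the affine hull of its surviving points of $S_0$, which is an admissible candidate for $M_i(S_2)$ losing at most one point, then symmetrize. The only difference is cosmetic: you handle explicitly the degenerate case $S_0\cap f^\ast=\emptyset$ (where the paper's replacement subspace would be empty), a minor edge case the paper's proof glosses over.
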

    \begin{proof}
    Let $f$ be a subspace of dimension at most $i$ that is spanned by points of $S_1$
    and contains $M_i(S_1)$ such points.
    If $f$ does not contain $x$ then $f$ is
    also  a candidate for $M_i(S_2)$, so
    in this case $M_i(S_2) \ge M_i(S_1)$. 
    If $f$ does contain $x$ then $S_1\cap f \setminus \{x\} \subseteq S_0$ spans a subspace
    $f'$ of $f$ which is  of dimension at most $i$, so it
    is a candidate for $M_i(S_2)$. Since it does not contain $x$ (and may contain $y$) we have in this case that 
  $M_i(S_2) \ge M_i(S_1)-1$. Therefore we can conclude that in anycase $M_i(S_2) \ge M_i(S_1)-1$. 
   A symmetric argument shows that 
    $M_i(S_1) \ge M_i(S_2) - 1$. Combining these two inequalities the lemma follows.
    \end{proof}

    \begin{lemma} \label{lem:private}
        The value of each $M'_i$, for $i=0,\ldots,d-1$, is $\eps$-differentially private.
    \end{lemma}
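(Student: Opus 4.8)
The plan is to recognize $M'_i$ as an instance of the Laplace mechanism applied to the statistic $M_i$, whose sensitivity is at most $1$ by Lemma~\ref{lem:senM}, and to verify the privacy guarantee directly from the density of the Laplace distribution. Since the paper has not stated the Laplace mechanism as a black box, I would spell out the short computation rather than cite it.

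First I would fix an index $i\in\{0,\ldots,d-1\}$ and two neighboring databases $S_1=S_0\cup\{x\}$ and $S_2=S_0\cup\{y\}$. Write $m_1=M_i(S_1)$ and $m_2=M_i(S_2)$; these are fixed integers, and by Lemma~\ref{lem:senM} we have $|m_1-m_2|\le 1$. The random variable $M'_i=M_i+Y_i$, evaluated on $S_1$ (resp.\ $S_2$), has probability density $g_1(t)=\tfrac{\eps}{2}e^{-\eps|t-m_1|}$ (resp.\ $g_2(t)=\tfrac{\eps}{2}e^{-\eps|t-m_2|}$) at every $t\in\reals$, because shifting a Laplace variable by a constant translates its density by that constant.

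Next I would bound the pointwise density ratio using the reverse triangle inequality and the sensitivity bound: for every $t\in\reals$,
\[
\frac{g_1(t)}{g_2(t)} \;=\; e^{\eps\left(|t-m_2|-|t-m_1|\right)} \;\le\; e^{\eps\,|m_1-m_2|} \;\le\; e^{\eps} .
\]
Integrating this over an arbitrary measurable event $T\subseteq\reals$ gives
\[
\Pr[M'_i(S_1)\in T] \;=\; \int_T g_1(t)\,dt \;\le\; e^{\eps}\int_T g_2(t)\,dt \;=\; e^{\eps}\,\Pr[M'_i(S_2)\in T] ,
\]
which is precisely the assertion that releasing $M'_i$ is $(\eps,0)$-differentially private. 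Note that the claim concerns each $M'_i$ in isolation, so the independence of the $Y_i$'s plays no role here; it becomes relevant only when the $d$ releases are later composed.

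I do not expect a genuine obstacle: the entire content beyond the standard Laplace-mechanism calculation is the sensitivity bound of Lemma~\ref{lem:senM}, which is already in hand. The only point warranting a moment's care is that $M_i$ is integer-valued, so that $M'_i$ is a single translated Laplace density for each fixed database — which is all the computation above uses — while across different databases the shift changes by at most $1$, exactly as the argument requires.
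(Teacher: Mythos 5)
Your proof is correct and matches the paper's argument: the paper simply invokes the standard Laplace-mechanism guarantee with the sensitivity-$1$ bound of Lemma~\ref{lem:senM}, and your write-up just spells out that standard density-ratio calculation explicitly. No gap.
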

    \begin{proof}
    This follows from  standard arguments in differential privacy (e.g., see
    \cite{DR14,S17}), since, by Lemma~\ref{lem:senM}, $M_i$ is of \emph{sensitivity} $1$
    (in the sense shown in the lemma).
    \end{proof}

Since we choose $j$ by comparing each of the $M'_j$'s to a value which is the same for neighboring data sets $S_1$ and $S_2$ (which have the same cardinality $n$),
Lemma~\ref{lem:private} implies that the choice of the dimension $j$ is differentially private. 

The next step is to choose the actual subspace $f$ in which to recurse. The following lemma implies
that this step too is differentially private. 
    \begin{lemma} \label{lem:sen-s}
        Let $S_1$ and $S_2$ be as in Lemma~\ref{lem:senM}. Then, for each $j=0,\ldots,d-1$ and for every subspace
$f \in Z'_j$, we have $|s_{S_1}(f)-s_{S_2}(f)| \le 2$.
    \end{lemma}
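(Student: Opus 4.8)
The plan is to reduce everything to two elementary facts: (i) the membership count $c(f)$ has sensitivity $1$, and (ii) $M_{j-1}$ has sensitivity $1$ (which is exactly Lemma~\ref{lem:senM}); then combine them using the fact that the truncation map $t\mapsto\max\{0,t\}$ is $1$-Lipschitz, so composing with it cannot amplify the discrepancy.

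First I would dispose of the case $j=0$: there $s_S(f)=c_S(f)$, and since $S_1=S_0\cup\{x\}$ and $S_2=S_0\cup\{y\}$ agree except for one point, a fixed subspace $f$ either gains one point, loses one point, or sees no change in its contained points when we pass from $S_1$ to $S_2$; hence $|c_{S_1}(f)-c_{S_2}(f)|\le 1$ and a fortiori $|s_{S_1}(f)-s_{S_2}(f)|\le 1\le 2$. This same observation, that $|c_{S_1}(f)-c_{S_2}(f)|\le 1$, is all we need from $c(\cdot)$ in the general case as well.

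For $j\ge 1$, write $s_{S_i}(f)=\max\{0,\,c_{S_i}(f)-M_{j-1}(S_i)\}$ for $i=1,2$. Using that $u\mapsto\max\{0,u\}$ satisfies $|\max\{0,u\}-\max\{0,v\}|\le|u-v|$, I would bound
\[
|s_{S_1}(f)-s_{S_2}(f)|\ \le\ \bigl|\bigl(c_{S_1}(f)-M_{j-1}(S_1)\bigr)-\bigl(c_{S_2}(f)-M_{j-1}(S_2)\bigr)\bigr|\ \le\ |c_{S_1}(f)-c_{S_2}(f)|+|M_{j-1}(S_1)-M_{j-1}(S_2)|.
\]
The first term on the right is at most $1$ by the combinatorial observation above, and the second term is at most $1$ by Lemma~\ref{lem:senM}; adding, we get the claimed bound of $2$.

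I do not expect a genuine obstacle here — the statement is essentially a sensitivity bookkeeping lemma. The only point worth stating carefully is why the outer $\max\{0,\cdot\}$ does not blow up the difference; this is handled by noting it is a contraction (it is $1$-Lipschitz). One should also make sure the argument is uniform in $f\in Z'_j$, including the ``dummy'' subspaces, for which $c(f)=0$ identically and hence the sensitivity is trivially $0$.
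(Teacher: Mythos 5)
Your proof is correct and follows essentially the same route as the paper's: bound the sensitivity of $c(f)$ by $1$, invoke Lemma~\ref{lem:senM} for the sensitivity of $M_{j-1}$, and add the two contributions. The paper leaves the $1$-Lipschitzness of $\max\{0,\cdot\}$ and the $j=0$/dummy-subspace cases implicit, which you spell out; otherwise the arguments coincide.
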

    \begin{proof}
Fix $j$ and $f\in Z'_j$. Clearly, $|c_{S_1}(f)-c_{S_2}(f)| \le 1$, and, by
    Lemma~\ref{lem:senM}, $M_{j-1}$ is also of \emph{sensitivity} $1$, and the claim follows.
    \end{proof}

    \begin{lemma} \label{privacy}
        The subspace-selection procedure described in this section (with all its recursive calls) is
        $2d^2\eps$-differentially private.
    \end{lemma}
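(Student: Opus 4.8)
The plan is to regard the subspace-selection procedure, unrolled over all of its recursive invocations, as a single \emph{adaptive composition} of elementary pure-DP mechanisms of two kinds: the Laplace mechanisms that produce the noisy counts $M'_0,\dots,M'_{d'-1}$ at a level whose ambient dimension is $d'$, and the exponential mechanisms that select the subspace $f$. Everything the procedure does in between — comparing the $M'_i$ to the thresholds of \eqref{eq:rul1}--\eqref{eq:rul2} to decide between the base case and an index $j$, restricting the input to $S\cap f$, projecting $f$ onto a coordinate subspace, and declaring failure when a zero-score subspace is drawn — should be argued to be post-processing (with one caveat, below), and hence privacy-free. The bound will then follow from the basic composition theorem for $(\eps,0)$-differential privacy, once I (a) count the elementary mechanisms and (b) certify that each one is $O(\eps)$-private with respect to the original pair of neighboring inputs.

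For the count: each recursive call strictly decreases the ambient dimension, since we always recurse into a subspace of dimension $j\le d'-1$; hence the ambient dimensions form a strictly decreasing sequence inside $\{0,1,\dots,d\}$ and there are at most $d$ levels. A level of ambient dimension $d'$ uses $d'$ Laplace steps (one per $M'_i$) and at most one exponential step, so overall there are at most $\binom{d+1}{2}$ Laplace steps and at most $d$ exponential steps (the leaf-level base-case invocation of the exponential mechanism, if one wishes to count it as part of the procedure, only adds one more $\eps$ at the bottom of the recursion). By Lemma~\ref{lem:private} — equivalently, by Lemma~\ref{lem:senM} together with the fact that the sensitivity-$1$ Laplace mechanism at scale $1/\eps$ is $\eps$-DP — each Laplace step is $\eps$-DP; and since by Lemma~\ref{lem:sen-s} the score $s(\cdot)$ has sensitivity at most $2$ while the procedure draws $f$ with weight $e^{\eps s/4}=e^{\eps s/(2\cdot 2)}$, which is exactly the exponential mechanism at privacy level $\eps$, Theorem~\ref{prop:exp_mech} makes each exponential step $\eps$-DP. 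A first summation therefore gives a privacy parameter of at most $\bigl(\binom{d+1}{2}+d\bigr)\eps$, which already lies below $2d^2\eps$ for all $d\ge1$, and the slightly finer accounting forced by the caveat below still keeps the total at $d^2\eps+d\eps\le 2d^2\eps$.

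The step I expect to demand the real care is making this composition genuinely \emph{adaptive} across the recursion. The subspace $f$ handed to a deeper call is itself a function of the randomness released so far; once we condition on it, the map $S\mapsto S\cap f$ and the ensuing coordinate projection are deterministic and send neighboring databases either to neighboring databases or — when the swapped point lies in $f$ on only one side — to databases differing by the insertion/deletion of a single point. One must verify that the sensitivity bounds of Lemmas~\ref{lem:senM} and~\ref{lem:sen-s} are robust to this; they are, because the arguments there bound the change in $M_i$ (resp.\ in $s(f)$) by $1$ (resp.\ $2$) already under a single insertion or deletion, not merely under a swap. The caveat is that the thresholds of \eqref{eq:rul1}--\eqref{eq:rul2} involve the current input size, through $n$ and $k=n/(4d)$ \emph{at that level}, which is data-independent at the top level (all inputs have cardinality $n$) but may differ by one between the two runs at deeper levels; there the choice between the base case and the index $j$ is no longer a pure post-processing of the noisy counts. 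I would resolve this by bundling the noisy-count computation with the choice of $j$ at each level: since $M_i$ has sensitivity $1$ and each threshold is an affine function of the current size with coefficient in $(0,1)$, the statistic ``$M'_i$ minus its threshold'' is a sensitivity-$\le 2$ quantity perturbed by $\mathrm{Lap}(1/\eps)$ noise and hence $2\eps$-DP, the choice of $j$ then being post-processing of these. Re-summing with this adjustment — the top level contributing $(d+1)\eps$ and a deeper level of dimension $d'$ contributing at most $(2d'+1)\eps$ — the total remains at most $d^2\eps+d\eps\le 2d^2\eps$, which is the asserted bound.
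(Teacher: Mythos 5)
Your proposal is correct, and its skeleton is the same as the paper's: the paper proves the lemma in three lines by noting that each $M'_i$ is $\eps$-DP (Lemma~\ref{lem:private}), each subspace-selecting exponential mechanism is $\eps$-DP (via the sensitivity-$2$ bound of Lemma~\ref{lem:sen-s} and the $e^{\eps s/4}$ weights), and then invoking composition with the crude count of at most $d$ noisy values plus one exponential step per level and at most $d$ levels, giving $d(d+1)\eps\le 2d^2\eps$. Where you go beyond the paper is in making the composition genuinely adaptive and in flagging the one real subtlety that the paper's proof (and its preceding remark that the comparison threshold ``is the same for neighboring data sets, which have the same cardinality $n$'') silently ignores: conditioned on the same transcript, the restricted inputs $S\cap f$ and $S'\cap f$ at deeper levels may be insertion/deletion neighbors of different cardinalities, so the thresholds in~(\ref{eq:rul1})--(\ref{eq:rul2}) can differ between the two runs and the choice of $j$ is no longer pure post-processing of the $M'_i$. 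Your repair --- bundling ``$M'_i$ minus its threshold'' into a single noisy statistic of sensitivity at most $2$, and checking that Lemmas~\ref{lem:senM} and~\ref{lem:sen-s} already hold under insertion/deletion --- is sound, and your re-summed budget $(d^2+d)\eps\le 2d^2\eps$ is valid. One small inaccuracy: since $k$ is kept fixed throughout the recursion, the threshold depends on the current size $n'$ with coefficient exactly $1$ (not a coefficient in $(0,1)$); the bundled statistic then has sensitivity at most $2$ (in fact $1$, since $M_i$ and $n'$ move in the same direction under deletion), so your bound stands and the constant only improves.
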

    \begin{proof}
    By Lemma \ref{lem:private} the computation of
    each $M'_i$ is $\eps$-differentially private, and by Lemma~\ref{lem:sen-s} the exponential mechanism
on the scores $s(f)$ is also $\eps$-differentially private. Since we compute at most $d$ values $M'_i$ at each step, and
we recurse at most $d$ times,
the claim follows by composition~\cite{DR14,S17}.
    \end{proof}

\begin{remark}
We can save a factor of $d$
in Lemma \ref{privacy} by using a framework called {\em the sparse vector technique},
see e.g., \cite{DR14}.
\end{remark}

    \subsubsection{Utility analysis}
    The following lemma is the key for our utility analysis.

    \begin{lemma} \label{lem:count}
    Let $\beta\in (0,1)$ be a given parameter. For  $k = \frac{n}{4d}$ and
    for every  $j=0,\ldots,d-1$ the following properties hold. 
    
\smallskip
\noindent    
    (i) If $M_j \ge n - (d-j+1)k$ then, with probability
    at least $1-\beta$,
    \[
    M'_j \ge n - (d-j+1)k - \frac{1}{\eps}\log\frac{2}{\beta} .
    \]
    (ii) On the other hand, if
    $
    M_j \le n - (d-j+1)k - \frac{2}{\eps}\log\frac{2}{\beta} ,
    $
    then, with probability at least $1-\beta$,
    \[
    M'_j \le n - (d-j+1)k - \frac{1}{\eps}\log\frac{2}{\beta} .
    \]
    \end{lemma}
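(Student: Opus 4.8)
The plan is to observe that both parts are immediate consequences of the standard tail bound for the Laplace distribution, since $M'_j = M_j + Y_j$ where $M_j$ is a deterministic function of the (fixed) input $S$ and $Y_j$ is drawn from the Laplace distribution with scale parameter $b=1/\eps$. First I would record the tail estimate: for the density $\tfrac{\eps}{2}e^{-\eps|x|}$ and any $t\ge 0$ one has $\Pr[Y_j\ge t]=\Pr[Y_j\le -t]=\tfrac12 e^{-\eps t}$; in particular, taking $t=\tfrac1\eps\log\tfrac2\beta$ yields a failure probability of $\tfrac12 e^{-\log(2/\beta)}=\tfrac{\beta}{4}\le\beta$.

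For part (i), assuming $M_j\ge n-(d-j+1)k$, the desired inequality $M'_j\ge n-(d-j+1)k-\tfrac1\eps\log\tfrac2\beta$ is implied by the one-sided event $Y_j\ge -\tfrac1\eps\log\tfrac2\beta$, whose complement has probability $\tfrac12 e^{-\eps\cdot\frac1\eps\log(2/\beta)}=\tfrac{\beta}{4}\le\beta$. For part (ii), assuming $M_j\le n-(d-j+1)k-\tfrac2\eps\log\tfrac2\beta$, the desired inequality $M'_j\le n-(d-j+1)k-\tfrac1\eps\log\tfrac2\beta$ is implied by $Y_j\le \tfrac1\eps\log\tfrac2\beta$, since then $M'_j\le \bigl(n-(d-j+1)k-\tfrac2\eps\log\tfrac2\beta\bigr)+\tfrac1\eps\log\tfrac2\beta$; the complementary event again has probability at most $\tfrac{\beta}{4}\le\beta$. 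I would state the argument for the specific choice $k=n/(4d)$ as in the lemma, although the proof is insensitive to the value of $k$.

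There is essentially no obstacle here; the only thing to be careful about is the bookkeeping of the two $\tfrac1\eps\log\tfrac2\beta$ terms and the direction of the one-sided tail used in each part. The point of leaving an additive gap of $\tfrac2\eps\log\tfrac2\beta$ between the hypotheses of (i) and (ii) (rather than both meeting exactly at the threshold used in \eqref{eq:rul1}--\eqref{eq:rul2}) is precisely so that on the ``good'' side of each hypothesis a single Laplace tail suffices, while the borderline band of width $\tfrac2\eps\log\tfrac2\beta$ around the threshold is deliberately left unconstrained: in that band either branch of the algorithm (recursing into a subspace via \eqref{eq:rul2}, or falling through to the base case via \eqref{eq:rul1}) will turn out to be acceptable in the subsequent utility argument, so Lemma~\ref{lem:count} is exactly the two-sided ``above-threshold'' guarantee that the later analysis needs.
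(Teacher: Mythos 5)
Your proof is correct and takes essentially the same route as the paper: both parts are immediate from the one-sided Laplace tail bound $\Pr\left[|Y_j|\ge \tfrac{1}{\eps}\log\tfrac{2}{\beta}\right]\le\beta$ applied in the appropriate direction, exactly as in the paper's (even terser) argument. Your extra remarks about the deliberate gap of $\tfrac{2}{\eps}\log\tfrac{2}{\beta}$ are consistent with how the lemma is used later, but are not needed for the proof itself.
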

    \begin{proof}
        (i) follows since the probability of the Laplace noise $Y_j$ to be
        smaller than $-\frac{1}{\eps}\log\frac{2}{\beta}$ is at most $\beta$,
 and (ii) follows since the probability of $Y_j$ to be
        larger than $\frac{1}{\eps}\log\frac{2}{\beta}$ is
        also at most $\beta$.
    \end{proof}

    We say that (the present recursive step of) our algorithm {\em fails} if one of the
    complements of the events specified in Lemma \ref{lem:count}
    happens, that is, the step fails if for some $j$, either (i)
    $M_j \ge n - (d-j+1)k$ and $M'_j < n - (d-j+1)k - \frac{1}{\eps}\log\frac{2}{\beta}$,
    or (ii) $M_j \le n - (d-j+1)k - \frac{2}{\eps}\log\frac{2}{\beta}$ and
    $M'_j > n - (d-j+1)k - \frac{1}{\eps}\log\frac{2}{\beta}$.
Otherwise we say that (this step of) our algorithm {\em succeeds}.\footnote{Note that there is a `grey zone' of values of $M_j$ between these two bounds, in which the step always succeeds.}

    It follows from Lemma \ref{flat} that if the algorithm succeeds and applies the base case then $D_{\ge k}$
    is full dimensional. Furthermore, if the algorithm succeeds and decides to recurse on a
subspace of dimension $j$ (according to the rule in~(\ref{eq:rul1}) and~(\ref{eq:rul2})) then, for every $\ell < j$,
    $M_\ell \le n - (d-\ell+1)k$ and $M_j \ge n - (d-j+1)k - \frac{2}{\eps}\log\frac{2}{\beta}$.
    The following lemma is an easy consequence of this reasoning. 
    
    \begin{lemma}
        If the algorithm succeeds, with dimension $j$, and applies the exponential mechanism to pick
a $j$-dimensional subspace, then there exists a $j$-dimensional subspace $f$ with score
        $s(f) = M_j - M_{j-1} \ge k-\frac{2}{\eps}\log\frac{2}{\beta}$. Furthermore, if
        $k\ge \frac{8d^2}{\eps}\log X + \frac{8}{\eps}\log \frac{1}{\beta}$ then,
 with probability at least $1-\beta$, the exponential mechanism picks a subspace $f$ with
$s(f) \ge M_j - M_{j-1} - \frac{k}{2} \ge \frac{k}{2}-\frac{2}{\eps}\log\frac{2}{\beta}$.
    \end{lemma}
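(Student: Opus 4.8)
The statement has two parts. First, an existence claim: if the algorithm succeeds with dimension $j$, there is a $j$-dimensional subspace $f$ with $s(f) = M_j - M_{j-1} \ge k - \frac{2}{\eps}\log\frac{2}{\beta}$. Second, a utility claim for the exponential mechanism: under the stated lower bound on $k$, with probability $\ge 1-\beta$ the mechanism returns a subspace of score at least $M_j - M_{j-1} - \frac{k}{2} \ge \frac{k}{2} - \frac{2}{\eps}\log\frac{2}{\beta}$.

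For the first part, the plan is to unwind the definitions. When the algorithm succeeds and selects dimension $j$ according to rules~(\ref{eq:rul1})–(\ref{eq:rul2}), the discussion preceding the lemma already records that $M_\ell \le n - (d-\ell+1)k$ for every $\ell < j$ while $M_j \ge n - (d-j+1)k - \frac{2}{\eps}\log\frac{2}{\beta}$; this is exactly what ``success'' at this step rules in (via the contrapositive of part~(ii) of Lemma~\ref{lem:count} for indices $<j$, and part~(i) together with rule~(\ref{eq:rul2}) at index $j$). Now let $f$ be a subspace of dimension \emph{at most} $j$ that realizes $M_j$, i.e.\ $c(f) = M_j$. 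I would first argue that $f$ has dimension exactly $j$: if its dimension were $\ell < j$ it would be a candidate for $M_\ell$, giving $M_\ell \ge M_j \ge n - (d-j+1)k - \frac{2}{\eps}\log\frac{2}{\beta}$; but $n - (d-j+1)k > n-(d-\ell+1)k \ge M_\ell$ (using $\ell<j$ and $k>0$), and for the noise slack $\frac{2}{\eps}\log\frac{2}{\beta}$ to not close this gap one needs $k$ not too small — precisely the regime we are in. Hence $\dim f = j$, so $f \in Z_j$, and by the definition of the score, $s(f) = c(f) - M_{j-1} = M_j - M_{j-1}$. Finally $M_{j-1} \le n - (d-j+2)k$ (from the $\ell = j-1$ bound, since $j\ge 1$; the $j=0$ case is handled separately with $s(f)=c(f)=M_0$), so $s(f) = M_j - M_{j-1} \ge \bigl(n-(d-j+1)k-\frac{2}{\eps}\log\frac{2}{\beta}\bigr) - \bigl(n-(d-j+2)k\bigr) = k - \frac{2}{\eps}\log\frac{2}{\beta}$.

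For the second part, the plan is a direct application of the utility guarantee of the exponential mechanism (Theorem~\ref{prop:exp_mech}). The mechanism runs over the augmented set $Z'_j$ of size $X^{d(j+1)}$, with score function $s(\cdot)$ of sensitivity at most $2$ (Lemma~\ref{lem:sen-s}). By the first part, ${\rm OPT} = \max_{f\in Z'_j} s(f) \ge k - \frac{2}{\eps}\log\frac{2}{\beta}$. Theorem~\ref{prop:exp_mech} then gives, with probability $\ge 1-\beta$, a subspace $f$ with
\[
s(f) \ge {\rm OPT} - \frac{2\cdot 2}{\eps}\ln\frac{|Z'_j|}{\beta}
= {\rm OPT} - \frac{4}{\eps}\ln\frac{X^{d(j+1)}}{\beta}
= {\rm OPT} - \frac{4d(j+1)}{\eps}\ln X - \frac{4}{\eps}\ln\frac1\beta .
\]
Since $j \le d-1$, we have $d(j+1) \le d^2$, so the additive loss is at most $\frac{4d^2}{\eps}\ln X + \frac{4}{\eps}\ln\frac1\beta \le \frac{k}{2}$ exactly under the hypothesis $k \ge \frac{8d^2}{\eps}\log X + \frac{8}{\eps}\log\frac1\beta$ (absorbing the $\ln$ vs $\log$ discrepancy into the constants as the paper does throughout). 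Combining, $s(f) \ge {\rm OPT} - \frac{k}{2} \ge M_j - M_{j-1} - \frac{k}{2} \ge \frac{k}{2} - \frac{2}{\eps}\log\frac{2}{\beta}$, as claimed.

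The only genuinely delicate point is the dimension-exactness argument in the first part — making sure the Laplace slack $\frac{2}{\eps}\log\frac{2}{\beta}$ is small enough, relative to $k = n/(4d)$, that a dimension-$\ell$ subspace with $\ell<j$ cannot masquerade as the $M_j$-realizer; this is where one quietly uses that $n$ (hence $k$) is large, consistent with the sample-complexity assumptions in force. Everything else is bookkeeping: tracking the constants $2$ (sensitivity), $d^2$ (from $d(j+1)$), and the factor-of-$2$ budget split between the two error terms.
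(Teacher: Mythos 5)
Your proposal is correct and follows essentially the same route as the paper: the first part is read off from the success conditions $M_\ell \le n-(d-\ell+1)k$ for $\ell<j$ and $M_j \ge n-(d-j+1)k-\frac{2}{\eps}\log\frac{2}{\beta}$, and the second part is the standard exponential-mechanism utility bound over $Z'_j$ with $|Z'_j|\le X^{d^2}$ and sensitivity $2$, which is exactly the tail computation $X^{d^2}e^{-\eps k/8}\le\beta$ that the paper performs directly rather than via Theorem~\ref{prop:exp_mech}. Your explicit argument that the $M_j$-realizing subspace has dimension exactly $j$ (equivalently $M_j>M_{j-1}$ once $k>\frac{2}{\eps}\log\frac{2}{\beta}$) is a detail the paper leaves implicit, and it is handled correctly.
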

    \begin{proof}
    The first part of the lemma follows from the definition of success, as just argued.
    For the second part notice that, since $|Z_j'|\le X^{d^2}$, the probability of drawing
a subspace $f'\in Z_j'$ of score smaller than $M_j - M_{j-1} - \frac{k}{2}$ is at most
\[
X^{d^2} \cdot 
 \frac{ 
	e^{\eps(M_j-M_{j-1}- k/2)/{4}}}{ 
	e^{\eps(M_j-M_{j-1})/{4}}} = X^{d^2} \cdot e^{-\eps k/8} \ .
\]
    This expression is at most $\beta$ if
            $k\ge \frac{8d^2}{\eps}\log X + \frac{8}{\eps}\log \frac{1}{\beta}$.
    \end{proof}

    If follows that if our algorithm succeeds and recurses in
    a subspace $f$ of dimension $j$ then, with probability at least $1-\beta$,
    \begin{align*}
    c(f) & \ge M_{j-1} + s(f) \ge M_j - \frac{k}{2} \\
    &\ge n - (d-j+1)k  - \frac{2}{\eps}\log\frac{1}{\beta} - \frac{k}{2} \ge
    n - \left(d-j+\frac{3}{2}\right)k - \frac{2}{\eps}\log\frac{1}{\beta} .
    \end{align*}

That is, when we recurse in $f$ of dimension $j$ we lose at most
    $\left(d-j+\frac{3}{2}\right)k + \frac{2}{\eps}\log\frac{1}{\beta}$ points.
Denote by $d_0=d,d_1,\ldots,d_t$ the sequence of dimensions into which the procedure recurses
(reaching the base case at dimension $d_t\ge 0$). Hence, keeping $k$ fixed throughout the recursion,
at the $r$-th recursive step we lose at most
    $\left(d_r-d_{r+1}+\frac{3}{2}\right)k + \frac{2}{\eps}\log\frac{1}{\beta}$ points.
    Summing up these losses over $r=0,\ldots,t-1$, the total loss is at most
    \[
(d_0-d_t)k + \frac{3}{2}kt
    + \frac{2t}{\eps}\log\frac{1}{\beta}
    \le \frac{5d}{2} \cdot k
    + \frac{2d}{\eps}\log\frac{1}{\beta} .
    \]

    Substituting $k=\frac{n}{4d}$,
     we get that the total number of points that we loose is
   at most $\frac{2n}{3}$ if
    $n = \Omega\left( \frac{d}{\eps}\log \frac{1}{\beta} \right)$, with a sufficiently
large constant of proportionality.

Notice that  we keep $k$ fixed throughout the recursion
and $n$ may decrease. Consequently,
 if $n'$ is the number of points in some recursive call
in some dimension $\ell < d$,
then $n' \ge \frac{n}{3}$ and therefore $k = \frac{n}{4d} \le \frac{3n'}{4d}$ which is still smaller than the centerpoint guarantee of $\frac{n'}{\ell+1}$.

 As described, our subspace-selection procedure (with all its recursive calls) is $2 d^2 \eps$-differentially private. Dividing $\eps$ by $2 d^2$ we get that our subspace-selection procedure is $\eps$-differentially private, and that the total number of points we lose is
    much smaller than $n$ if
    $n = \Omega\left( \frac{d^3}{\eps}\log \frac{1}{\beta} \right)$. 

    Recall Section \ref{sec:base}, where we showed that we need
    $n = \Omega\left(\frac{d^4\log dX}{\eps} \right)$ for the ($\eps$-differentially private) base case to work correctly.
    (Recall also that the base case is actually applied in a suitable projection of the
    terminal subspace onto some coordinate-frame subspace of the same dimension, and that
    the above lower bound on $n$ suffices for any such lower-dimensional instance too.)

    The following theorem summarizes the result of this section.

    \begin{theorem} \label{thm:pure}
     Given $n = \Omega\left(\frac{d^4\log dX}{\eps} + \frac{d^3 \log \frac{1}{\beta}}{\eps} \right)$ points,
        our algorithm (including all recursive calls and the base case) is
        $\eps$-differentially private,
        runs in $O\left( n^{1+(d-1) \lfloor d/2 \rfloor} \right)$ time,
        and finds a point of depth at least
        $k=\frac{n}{4d}$ with probability at least $1-2d^2\beta$.
    \end{theorem}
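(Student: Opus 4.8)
The theorem is essentially a bookkeeping summary that glues together the three ingredients developed in this section: the privacy accounting, the utility (success-probability) accounting, and the runtime bound. I would organize the proof around these three claims, in that order, and the main work is simply to verify that the stated lower bound on $n$ is strong enough to drive each recursive call.

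\medskip
\textbf{Privacy.} First I would dispose of privacy quickly. By Lemma~\ref{privacy}, the subspace-selection procedure together with all its recursive calls is $2d^2\eps$-differentially private, and by Theorem~\ref{thm:exp-mechanism} (composed with the analysis in Section~\ref{sec:no-deg}) the base case is $\eps$-differentially private; rescaling the noise parameter and the exponential-mechanism parameter by a factor of $2d^2$ throughout makes the whole algorithm $\eps$-differentially private. (This is exactly the rescaling already invoked in the text just before the theorem statement.) One subtlety to mention: privacy must hold for \emph{every} input, including the degenerate ones and inputs on which the recursion fails, so I would stress that the privacy guarantees of Lemmas~\ref{lem:private}, \ref{lem:sen-s}, \ref{privacy} and Theorem~\ref{prop:exp_mech} are all worst-case and do not rely on any success event.

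\medskip
\textbf{Utility.} This is the crux. I would let $d=d_0 > d_1 > \cdots > d_t \ge 0$ be the (random) sequence of dimensions into which the algorithm recurses, with the base case applied at dimension $d_t$. Conditioning on the event that \emph{every} recursive step succeeds (in the sense defined before Lemma~\ref{lem:count}) and that \emph{every} application of the exponential mechanism to $Z'_{d_r}$ lands on a subspace of score $\ge M_{d_r}-M_{d_r-1}-k/2$, the displayed "loss" computation in the text shows that the total number of points discarded over all recursive steps is at most $\frac{5d}{2}k + \frac{2d}{\eps}\log\frac1\beta$, which with $k=n/(4d)$ is at most $\frac{2n}{3}$ once $n=\Omega\!\left(\frac{d}{\eps}\log\frac1\beta\right)$; hence the terminal instance still has $n'\ge n/3$ points, and $k=n/(4d)\le \frac{3n'}{4d}$ stays below the centerpoint bound $n'/(d_t+1)$, so $D_{\ge k}$ restricted to the terminal subspace is non-degenerate by Lemma~\ref{flat}. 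Then, because $n'\ge n/3$ and $n=\Omega\!\left(\frac{d^4\log dX}{\eps}\right)$ (after rescaling $\eps$ by $2d^2$, this is where the $d^4$ in the hypothesis really comes from, via $d^2$ in Lemma~\ref{privacy} times the $d^2$-type dependence of the base case), the hypothesis of Theorem~\ref{thm:exp-mechanism} holds for the terminal instance, so the base case returns a point of Tukey depth $\ge k$ with respect to $S\cap(\text{terminal subspace})$ with probability $\ge 1-\beta$; lifting it back through the coordinate projections (as described in Section~\ref{uni-deg}) preserves membership in the convex hull, so it has depth $\ge k$ with respect to the original $S$. Finally I would union-bound over the bad events: at each of the $\le d$ recursive levels there are $\le d$ Laplace comparisons (Lemma~\ref{lem:count}, contributing $\le d^2$ failure events of probability $\le\beta$ each) plus one exponential-mechanism failure, plus the single base-case failure — in total $O(d^2)$ events, giving overall failure probability at most $2d^2\beta$, matching the statement.

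\medskip
\textbf{Runtime.} Each recursive call does the following in dimension $\le d$: build the (possibly degenerate) dual arrangement $\A(S^*)$ and read off all flat multiplicities $c(f)$ and the values $M_i$ in $O(n^d)$ time (noted in Section~\ref{sec:details}); draw the $d$ Laplace variables; run the exponential mechanism on $Z'_j$ in time proportional to the number of subspaces spanned by points of $S$, which is $O(n^d)$. The base case costs $O\!\left(n^{1+(d-1)\lfloor d/2\rfloor}\right)$ by Theorem~\ref{thm:exp-mechanism}. Since we recurse at most $d$ times and the dimension only drops, the dominant term is the single base-case cost, giving $O\!\left(n^{1+(d-1)\lfloor d/2\rfloor}\right)$ overall.

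\medskip
\textbf{Main obstacle.} The delicate part is the utility argument, specifically keeping the parameter $k=n/(4d)$ \emph{fixed} across all recursive levels while $n$ shrinks, and checking that after the cumulative loss of up to $\sim\frac{5d}{2}k$ points the surviving count $n'$ still satisfies both $k<n'/(d_t+1)$ (so the centerpoint guarantee, hence non-degeneracy via Lemma~\ref{flat}, survives) and the base-case hypothesis $n'=\Omega\!\left(\frac{d^4\log dX}{\eps}\right)$ of Theorem~\ref{thm:exp-mechanism}. Both reduce to the single requirement $n=\Omega\!\left(\frac{d^4\log dX}{\eps}+\frac{d^3}{\eps}\log\frac1\beta\right)$ after the $2d^2$-rescaling of $\eps$, which is exactly the hypothesis; everything else is routine composition and union bounds.
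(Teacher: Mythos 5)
Your overall structure --- privacy by composition via Lemma~\ref{privacy} plus the exponential mechanism, utility by summing the per-level losses with $k=n/(4d)$ kept fixed and then invoking Theorem~\ref{thm:exp-mechanism} on the terminal (projected) instance, runtime dominated by the base case --- is exactly the paper's proof, which is itself only a short summary of the preceding analysis; your union bound over the $O(d^2)$ failure events and the runtime accounting also match.

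However, your privacy/sample-complexity bookkeeping has a genuine slip. You rescale $\eps$ by $2d^2$ ``throughout'', i.e.\ including the base case, and then justify the hypothesis $n=\Omega\bigl(d^4\log(dX)/\eps\bigr)$ by asserting that the $d^4$ arises as the $2d^2$ rescaling times a ``$d^2$-type dependence of the base case''. But Theorem~\ref{thm:exp-mechanism} already requires $n\ge \frac{4d^4\log(dX)}{\eps}+\frac{4d}{\eps}\log\frac{1}{\beta}$ at privacy parameter $\eps$ (its $d^4$ comes from the $(dX)^{d^3}$ volume lower bound combined with $k=n/(4d)$, not from any rescaling). So if the base case is actually run with the rescaled parameter $\eps/(2d^2)$, its hypothesis becomes $n'=\Omega\bigl(d^6\log(dX)/\eps\bigr)$, which the stated $n$ does not provide, and your step ``the hypothesis of Theorem~\ref{thm:exp-mechanism} holds for the terminal instance'' breaks. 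The paper's accounting is different: only the subspace-selection steps (the Laplace values and the subspace exponential mechanism across the at most $d$ recursion levels, i.e.\ the $2d^2\eps$ of Lemma~\ref{privacy}) are rescaled, while the base case keeps its own $\eps$ budget; the $d^4\log(dX)/\eps$ term in the theorem is then simply the base case's own requirement, and the $d^3\log(1/\beta)/\eps$ term is the loss bound $\Omega\bigl(\frac{d}{\eps}\log\frac{1}{\beta}\bigr)$ after rescaling. With that correction (and the paper's implicit absorption of the resulting constant factor in the overall privacy parameter), the rest of your argument --- total loss at most $\frac{5d}{2}k+\frac{2d}{\eps}\log\frac{1}{\beta}$, hence $n'\ge n/3$ and $k\le\frac{3n'}{4d}$ below the centerpoint bound so Lemma~\ref{flat} yields non-degeneracy at the terminal level --- is exactly the paper's.
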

    \begin{proof}
        The privacy statement follows by composition, using Lemma~\ref{privacy}
        and the privacy properties of the exponential mechanism.
The confidence bound follows since the probability of failure
        of the recursive call in a subspace of dimension $j$
        is at most $(j+1) \beta$.
The running time of the algorithm is dominated by the running time of the
exponential mechanism that we perform at the bottom (base case) of the recursion.
    \end{proof}

    \section{An $O(n^d)$ algorithm via volume estimation}\label{sec:nd_time}

    The upper bound on the running time in Theorem \ref{thm:pure} is dominated by the running time of
    the base case, in which we compute all the regions $D_{\ge \ell}$ explicitly, which takes $n^{O(d^2)}$ time.
    To reduce this cost, we use instead a mechanism that (a) estimates the volume of $D_{k}$
    to a sufficiently small relative error, and (b) samples a random point ``almost'' uniformly from $D_k$.
    We show how to accomplish (a) and (b) using the volume estimation mechanisms of Lov\'asz and Vempala~\cite{LV:vol} and later of
Cousins and Vempala \cite{CV}. We also show how to use these procedures to implement
    approximately the exponential mechanism described in Section \ref{sec:base}.
    These algorithms are Monte Carlo, so they fail with some probability, and when they fail we may
lose our $\eps$-differential privacy guarantee. As a result, the modified algorithm will not be purely
differentially private, as the one in Section \ref{uni-gen}, and we will only be able to guarantee that
it is $(\eps,\delta)$-differentially private, for any prescribed $\delta>0$. 
We obtain the following theorem which we prove in the rest of this section.
    \begin{theorem} \label{thm:approx}
	Given $n = \Omega\left(\frac{d^4\log \frac{dX}{\delta}}{\eps}\right)$
	points,
    our algorithm (including all recursive calls and the base case) is
    $(\eps,\delta)$-differentially private,
    runs in $O\left(  n^{d} \right)$ time,
    and  finds a point of depth at least
    $k=\frac{n}{4d}$ with probability at least $1-\delta$.
\end{theorem}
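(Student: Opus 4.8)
The plan is to transform the purely $\eps$-differentially private algorithm of Theorem~\ref{thm:pure} into an efficient $(\eps,\delta)$-differentially private one by replacing the two brute-force geometric steps of the base case---(a) exact computation of $\vol(D_k)$ for all relevant $k$, and (b) exact uniform sampling from a chosen $D_k$---with the randomized volume-estimation and near-uniform-sampling machinery of Lov\'asz--Vempala~\cite{LV:vol} and Cousins--Vempala~\cite{CV}, applied to the convex bodies $D_{\ge k}$. First I would recall that $D_{\ge k}$ is convex (and in the base case full-dimensional, since we only reach it after the subspace-selection procedure of Section~\ref{uni-deg} certifies non-degeneracy), so it is a legitimate input to these samplers; I would also recall from Lemma~\ref{lem:Dd-1} that each $D_{\ge k}$ has only $O(n^{d-1})$ defining halfspaces, and that the volume-estimation framework only needs a membership oracle, which here costs $O(n^{d-1})$ time (or $O(n)$ after preprocessing the dual arrangement). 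I would set the relative-error parameter of the volume estimator to something like $\eta = \Theta(\eps/n)$ and the total-variation error of the sampler to $\delta' = \Theta(\delta/n)$, so that the per-$k$ failure probability and the per-$k$ distortion of the sampling distribution $\mu_k \propto e^{\eps k/2}\vol(D_k)$ are small enough to survive a union bound over the $O(n)$ values of $k$.

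Next I would carry out the privacy accounting. The exact exponential mechanism is $\eps$-DP because $\mu_k$ has the right multiplicative form; here I must argue that the \emph{approximate} sampler's output distribution is, with probability $1-\delta/2$ over the sampler's internal randomness, within multiplicative $e^{\eps}$ of a valid exponential-mechanism distribution up to an additive $\delta/2$, so that overall we get $(\eps,\delta)$-DP. The clean way to do this is: condition on the event $E$ that all $O(n)$ volume estimates are within $(1\pm\eta)$ of the truth and all samplers are within $\delta'$ in TV of uniform on the respective $D_k$; on $E$ the mechanism is $(\eps', n\delta')$-indistinguishable from the pure exponential mechanism for a slightly inflated $\eps' = \eps + O(n\eta)$; and $\Pr[\neg E] \le \delta/2$ contributes the remaining additive slack. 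Rescaling $\eps$ and $\eta$ by constant/linear factors absorbs the inflation. Composition with the (already $\eps$-DP after rescaling) subspace-selection procedure, exactly as in Theorem~\ref{thm:pure}, then yields $(\eps,\delta)$-DP for the whole algorithm; the sample-complexity bound $n = \Omega\!\left(\frac{d^4\log(dX/\delta)}{\eps}\right)$ comes from the base-case requirement of Theorem~\ref{thm:exp-mechanism} (now with $\log(dX/\delta)$ in place of $\log(dX)+\log(1/\beta)$, since we fold $\beta$ into $\delta$) together with the volume lower bound, Lemma~\ref{lem:no-vol} via the degeneracy handling, and the $n\eta$, $n\delta'$ union bounds.

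For utility, I would reuse the argument of Section~\ref{sec:base}: conditioned on $E$ and on $D_{\ge n/4d}$ being full-dimensional (guaranteed by the success of subspace selection, Lemma~\ref{flat}), the probability of outputting a point outside $\conv(S)$ is at most $(dX)^{d^3}/e^{\eps n/(4d)}$ up to the $(1\pm\eta)$ and $\delta'$ distortions, which is $\le \delta$ for the stated $n$; adding $\Pr[\neg E]\le\delta/2$ and the $O(d^2\beta)$-type failure of the recursion (with $\beta$ set to $\Theta(\delta/d^2)$) keeps the total failure probability below $\delta$. For running time, each of the $O(n)$ invocations of the Lov\'asz--Vempala / Cousins--Vempala estimator runs in time polynomial in $d$, in $1/\eta$, in $\log(1/\delta')$, and in the membership-oracle cost $O(n^{d-1})$, hence $\tilde O(n^{d-1}\cdot\poly(d,n)) = \tilde O(n^d)$ in our $O(\cdot)$ convention that hides $\poly(d)$ and logarithmic factors; constructing the dual arrangement once costs $O(n^d)$; the recursion multiplies by a factor $d$. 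So the base case, and hence the whole algorithm, runs in $O(n^d)$ time.

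The main obstacle I anticipate is the privacy argument for the approximate sampler, specifically making precise the claim that a sampler guaranteed only to be $\delta'$-close in total variation to uniform on $D_k$ still yields an $(\eps,\delta)$-DP mechanism: additive TV error does not compose multiplicatively the way the exponential mechanism's analysis wants, so one must carefully separate the ``good event'' $E$ (on which everything is multiplicatively controlled) from its complement (absorbed additively), and check that the volume-estimation relative error $\eta$ feeds into $\eps$ only additively through $\sum_k$ and hence is killed by choosing $\eta = O(\eps/n)$. A secondary subtlety is that the volume estimator and the sampler are randomized and their failure must not leak information about $S$ in a way that breaks DP --- this is handled by noting that their failure \emph{probability} is data-independent (it depends only on $\eta,\delta'$ and the dimension), so conditioning on success is benign. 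I would also need to double-check that when $D_{\ge k}$ is full-dimensional but ``thin'' (close to degenerate), the samplers still apply; here the volume lower bound $1/(dX)^{d^3}$ from Section~\ref{sec:base} together with containment in $[0,1]^d$ gives the needed sandwiching / rounding bound so that the Cousins--Vempala isotropy-rounding preprocessing terminates in the claimed time.
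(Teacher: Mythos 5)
Your high-level plan (swap the brute-force base case for Lov\'asz--Vempala rounding plus Cousins--Vempala volume estimation and sampling, condition on a good event, absorb the estimator failures into $\delta$) is the same as the paper's, but it has a genuine gap at the central technical point: the exponential mechanism is over the regions $D_k = D_{\ge k}\setminus D_{\ge k+1}$, which are \emph{not convex}, so the Cousins--Vempala machinery cannot be ``applied to $D_k$'' as your good event $E$ assumes (samplers within TV distance $\delta'$ of uniform on ``the respective $D_k$''), and relative-error estimates of $V_k={\rm Vol}(D_{\ge k})$ and $V_{k+1}$ do not give a relative-error estimate of the difference $v_k=V_k-V_{k+1}$ (when $v_k\ll V_k$ the difference of two $(1\pm\alpha)$-estimates can be arbitrarily off, even negative), so the per-$k$ probabilities $\propto e^{\eps k/2}v_k$ cannot be controlled multiplicatively this way. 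The paper's proof hinges on an idea your proposal is missing: rewrite the exponential-mechanism distribution as a mixture of \emph{uniform distributions over the nested convex bodies} $D_{\ge \ell}$, solving the triangular system $\sum_{\ell\le m}\lambda_\ell/V_\ell = Ce^{\eps m/2}$ to get $\lambda_0=CV_0$ and $\lambda_\ell = C(1-e^{-\eps/2})e^{\eps\ell/2}V_\ell$; only the convex $D_{\ge\ell}$ then need to be rounded, volume-estimated and sampled, and a $(1\pm\alpha)$ error on each $V_\ell$ translates into a uniform $\frac{1\pm\alpha}{1\mp\alpha}$ distortion of every $\lambda_\ell$ (Lemma~\ref{lem:lambdap}). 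Without this (or an equivalent device, e.g.\ rejection sampling with its own cost analysis), the privacy and utility claims do not go through.

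Two further points. First, your error accounting ``$\eps'=\eps+O(n\eta)$, hence take $\eta=\Theta(\eps/n)$'' is based on the errors summing over the $O(n)$ depth values; in the correct (mixture) formulation the distortion does not accumulate over $\ell$ because each $\lambda_\ell$ is a ratio whose numerator and denominator are each distorted by $1\pm\alpha$, so $\alpha=\Theta(\eps)$ suffices. Your conservative choice also damages the runtime: the Cousins--Vempala query complexity scales like $1/\alpha^2$, so $\alpha=\Theta(\eps/n)$ inflates the base case by a factor of about $n^2$, and your line ``$\tilde O(n^{d-1}\cdot\poly(d,n)) = \tilde O(n^d)$'' is not valid; the paper's $O(n^d)$ bound needs $\alpha$ independent of $n$ together with the $O(n^{d-1})$ membership test of Lemma~\ref{lem:Dd-1} and at most $O(n)$ bodies. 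Second, the rounding step needs more than ``volume at least $1/(dX)^{d^3}$ and containment in $[0,1]^d$'': the paper proves an explicit inscribed-ball radius lower bound for each non-degenerate $D_{\ge\ell}$ (via the bounded denominators of its vertices and facet hyperplanes) to get $\log R = O(d^3(\log X+\log d))$ in Theorem~\ref{thm:lv}; your sandwiching claim is plausible but asserted, not proven.
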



\subsection{Volume estimation}

We use the following result of Cousins and Vempala for estimating the volumes of the convex polytopes $D_{\ge k}$.

\begin{theorem}[Cousins and Vempala \protect{\cite[Theorem 1.1]{CV}}] \label{thm:cv}
	Let $K$ be a convex body in $\reals^d$ that
	contains the unit ball around the origin and satisfies $E_K(||X||^2)=O(d)$.\footnote{Here $E_K$ is expectation under the uniform measure within $K$.}
	Then, for any $\alpha , \beta > 0$, there is an algorithm that can access $K$ only via membership queries
	(each determining whether a query point $q$ lies in $K$), and, with probability at least $1-\beta$,
	approximates the volume of $K$ with a relative error $\alpha$ (i.e., in the range $1\pm \alpha$ times the true volume),
	and performs ${\displaystyle O\left(\frac{d^3}{\alpha^2}\log^2 d \log^2 \frac{1}{\alpha} \log^2 \frac{d}{\alpha} \log \frac{1}{\beta}\right)}$
	membership queries in $K$.
\end{theorem}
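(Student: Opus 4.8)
Since this is a result quoted verbatim from Cousins and Vempala, the task is really to recall the shape of its proof rather than to reprove it here. The strategy is the multiphase Monte Carlo (``simulated annealing'') method for volume estimation: reduce $\vol(K)$ to a telescoping product of expectations, each of which is estimated by averaging a cheap function over samples drawn from a log-concave distribution on $K$ by a rapidly mixing random walk that accesses $K$ only through membership queries.

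In detail, I would work with the Gaussian family $f_i(x)=e^{-\|x\|^2/(2\sigma_i^2)}$ restricted to $K$, along an increasing temperature schedule $\sigma_0^2<\sigma_1^2<\cdots<\sigma_m^2$. The unit-ball hypothesis is used at the cold end: for $\sigma_0$ a small constant multiple of $1/\sqrt d$, the mass $\int_K f_0$ agrees with the full-space integral $(2\pi\sigma_0^2)^{d/2}$ up to an explicitly computable constant factor (the Gaussian already concentrates inside the unit ball). The second-moment hypothesis $E_K(\|X\|^2)=O(d)$ is used at the hot end: for $\sigma_m^2$ a large enough multiple of $d$, $\int_K f_m$ equals $\vol(K)$ up to a factor $1\pm\alpha/3$. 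One then uses
\[
\vol(K)\;\approx\;\int_K f_m\;=\;\Bigl(\int_K f_0\Bigr)\cdot\prod_{i=0}^{m-1}\frac{\int_K f_{i+1}}{\int_K f_i},
\]
and notes that each factor in the product equals $E_{\mu_i}\!\bigl[f_{i+1}/f_i\bigr]$, where $\mu_i\propto f_i$ is a distribution on $K$.

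Three ingredients then complete the argument. \emph{(1) Schedule.} Taking $\sigma_{i+1}^2=\sigma_i^2\bigl(1+\Theta(1/\sqrt d)\bigr)$ (the ``Gaussian cooling'' choice) keeps consecutive densities close, bounds the variance of $f_{i+1}/f_i$ under $\mu_i$ by $O(1)$, and gives $m=O(\sqrt d\log d)$ phases; hence $O(1/\alpha^2)$ samples per phase estimate each ratio to relative error $O(\alpha/\sqrt m)$, so that the product is within $1\pm\alpha$ after a union bound, with an extra $\log(1/\beta)$ factor driving the per-phase failure probability down. \emph{(2) Sampling and warm starts.} A sample from $\mu_{i-1}$ is a warm start for $\mu_i$ (bounded $L^2$ density ratio, by the schedule), so a ball walk or hit-and-run chain on $K$, started there, produces an approximate $\mu_i$-sample in polynomially many steps, each step costing $O(1)$ membership queries. \emph{(3) Mixing time.} Bound the conductance of the chain via a log-concave isoperimetric inequality together with a one-step overlap estimate; under near-isotropy (from the second-moment bound) and a warm start, the mixing time is of order $d^2$ up to polylogarithmic factors for Gaussian targets. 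Multiplying the number of phases, the samples per phase (times $\log(1/\beta)$), and the steps per sample yields the stated bound $O\!\left(\frac{d^3}{\alpha^2}\log^2 d\,\log^2\frac{1}{\alpha}\,\log^2\frac{d}{\alpha}\,\log\frac{1}{\beta}\right)$ on membership queries.

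The hard part — carrying essentially all of the difficulty and essentially all of the $d$-dependence — is ingredient (3): proving rapid mixing of the walk on $K$ from a warm start. This rests on a sharp isoperimetric inequality for the restriction of a Gaussian to a convex body, a careful conductance argument, and control of the thin near-boundary region in which a single ball-walk step can be rejected; it is precisely the state of the art on log-concave isoperimetry together with the Gaussian-cooling refinement that produces the $d^3$ exponent rather than a larger power of $d$. Ingredients (1) and (2) are, by comparison, routine second-moment and warm-start bookkeeping once (3) is available.
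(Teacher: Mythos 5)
This statement is not proved in the paper at all: it is imported verbatim as \cite[Theorem 1.1]{CV}, and the paper uses it as a black box (together with Theorem~\ref{thm:lv} for rounding), so there is no internal proof to compare your sketch against. You correctly recognize this, and your outline is a faithful recollection of the Cousins--Vempala strategy: a telescoping product of Gaussian integrals over $K$, ratio estimators sampled from the intermediate log-concave distributions via a membership-oracle random walk, with the unit-ball hypothesis anchoring the cold end and the second-moment hypothesis $E_K(\|X\|^2)=O(d)$ anchoring the hot end.

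Two small cautions on the bookkeeping, since that is where your sketch is loosest. First, with per-sample variance $O(1)$ for the ratio $f_{i+1}/f_i$ under $\mu_i$, getting relative error $O(\alpha/\sqrt m)$ in one phase needs on the order of $m/\alpha^2$ samples, not $O(1/\alpha^2)$; with your own numbers ($m=\tilde O(\sqrt d)$ phases, $\tilde O(d^2)$ steps per sample) the corrected count $m^2/\alpha^2\cdot d^2=\tilde O(d^3/\alpha^2)$ is what actually reproduces the stated query bound, whereas your version would give $d^{2.5}$. Second, the improvement from the $O^*(d^4)$ of Lov\'asz--Vempala to $O^*(d^3)$ in \cite{CV} is not just ``state-of-the-art isoperimetry plus a generic $d^2$ mixing bound''; it comes from the specifically Gaussian features of the schedule and the walk --- an accelerated cooling rate in the low-variance regime and a mixing-time bound for the ball walk that scales with the Gaussian variance --- so the amortized cost per phase is what drops. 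Since the theorem is cited rather than reproved, none of this affects the paper, but if you intend your sketch as a proof outline those two points are where it would need tightening.
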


We do not know when the algorithm fails; failure simply means that the value that it returns is not within
the asserted range. 

\paragraph*{Membership tests.}
Given a query point $q$ and a parameter $\ell$, to test whether $q\in D_{\ge \ell}$
amounts to verifying that $q$ lies in each of the halfspaces whose intersection
forms $D_{\ge \ell}$. Lemma \ref{lem:Dd-1} shows that there are at most $O(n^{d-1})$
such halfspaces, and we can prepare in advance a list of these halfspaces, as a by-product of the construction of
the arrangement $\A(S^*)$ of the hyperplanes dual to the points in $S$. 
Thus we can implement a membership query in $O(n^{d-1})$ time.
(Recall that we hide in the $O(\cdot)$-notation factors that are polynomial in $d$.)

In order to bring $D_{\ge \ell}$ into the `rounded' form assumed in 
Theorem \ref{thm:cv}, we use an algorithm of Lov\'asz and Vempala~\cite{LV:vol}
to bring a convex body $K$ into an \emph{approximate isotropic position}.

\begin{definition}
	A convex body $K$ in $\reals^d$ is in isotropic position if for every unit vector $v\in \reals^d$,
	$\int_K (v\cdot x) dx =  1$. $K$ is in $t$-nearly isotropic position if for
	every unit vector $v\in \reals^d$, $\frac{1}{t}\le \int_K (v\cdot x) dx \le t$.
\end{definition}

It is easy to verify that if $K$ is in $t$-nearly isotropic position then $E_K(||X||^2)\le td$.
\begin{theorem}[Lov\'asz and Vempala~\protect{\cite[Theorem 1.1]{LV:vol}}] \label{thm:lv}
	There is an algorithm that, given a convex body $K$ that contains the unit ball around the origin
	and is contained in a ball of radius $R$, 
	and a confidence parameter $\beta > 0$,
	brings $K$ into a 2-nearly isotropic position.
	The algorithm succeeds with probability at least $1-\beta$, and performs
	$O(d^4 \log^7 d \log \frac{d}{\beta} \log R)$ membership queries in $K$.
\end{theorem}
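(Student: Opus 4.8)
The plan is to prove the rounding theorem by the classical covariance-iteration (``rounding by sampling'') scheme: draw a batch of approximately uniform points from the current body, estimate its inertia (covariance) matrix by the empirical covariance of the sample, and apply the linear map sending that estimate to the identity. One such pass would already yield a $2$-nearly isotropic body if we could sample near-uniformly from an arbitrary body sandwiched between the unit ball $B$ and $RB$; the obstruction is that the rapidly-mixing geometric walks (ball walk, hit-and-run) take a number of steps that grows polynomially with the eccentricity of the body, so running one on a body as skewed as $R$ is far too expensive. The fix --- and the source of the $\log R$ factor --- is to bootstrap over $T=O(\log R)$ phases, maintaining the invariant that after phase $i$ we hold a linear map $A_i$ for which $A_iK$, translated so its centroid is at the origin, contains $B$ and is contained in a ball of radius $R_i$, with $R_0=R$ and $R_{i+1}\le\max\{R_i/2,\,c\sqrt d\}$ for an absolute constant $c$; thus after $O(\log R)$ phases the body is $\mathrm{poly}(d)$-rounded. (Equivalently one can run the phases on the nested bodies $K\cap 2^jB$, $j=0,\dots,\lceil\log R\rceil$: once the inner body has been rounded, the next one is only $\mathrm{poly}(d)$-eccentric in the transformed coordinates, using the classical fact that a convex body with centroid at the origin is contained in $-d$ times itself.)

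Within a phase the body we actually sample from is $\mathrm{poly}(d)$-rounded, so the walk mixes in $\mathrm{poly}(d)$ steps, and the per-phase work is: (a) obtain a warm start --- the uniform distribution on the inscribed unit ball in phase $0$, and, for later phases, the near-uniform output distribution of the previous phase pushed through $A_iA_{i-1}^{-1}$, which keeps the warmth $\mathrm{poly}(d)$-bounded; (b) run the walk to produce $N=\tilde O(d)$ nearly-uniform, nearly-independent samples $X_1,\dots,X_N$, at a cost of $\tilde O(d^4)$ steps (the mixing time times $N$, up to polylogarithmic overheads); (c) form $\hat b=\frac1N\sum_\ell X_\ell$ and $\hat\Sigma=\frac1N\sum_\ell(X_\ell-\hat b)(X_\ell-\hat b)^{\!\top}$ and invoke a Rudelson-type spectral concentration bound for samples from a (nearly) log-concave distribution to conclude, with probability $1-\beta/(2T)$, that $\tfrac12\Sigma\preceq\hat\Sigma\preceq2\Sigma$ --- and, in one final polishing phase run with $N=\tilde O(d/\eps_0^2)$ samples, that $(1-\eps_0)\Sigma\preceq\hat\Sigma\preceq(1+\eps_0)\Sigma$ for a small absolute constant $\eps_0$; (d) update $A_{i+1}=\hat\Sigma^{-1/2}A_i$ and translate by $\hat b$. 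Conjugating by $\hat\Sigma^{-1/2}$ replaces the covariance $\Sigma$ by $\hat\Sigma^{-1/2}\Sigma\hat\Sigma^{-1/2}$, which then lies between $\tfrac12I$ and $2I$, so every axis changes by at most a constant factor and the radius genuinely halves until it reaches the $\mathrm{poly}(d)$ floor; after the polishing phase the covariance is within $1\pm\eps_0$ of the identity, i.e.\ the body is in $2$-nearly isotropic position. Summing, the oracle cost is $O(\log R)$ times $\tilde O(d^4)$, which is $O(d^4\log^7 d\,\log\frac d\beta\,\log R)$ membership queries, and a union bound over the $O(\log R)$ phases --- each mixing event and each concentration event --- gives overall success probability at least $1-\beta$.

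The main obstacle is making steps (b) and (c) quantitatively tight: bounding the mixing time of the walk in a body that is only $\mathrm{poly}(d)$-rounded (not yet isotropic) from a merely $\mathrm{poly}(d)$-warm start, and then controlling the deviation of $\hat\Sigma$ when the $X_\ell$ are neither exactly uniform nor exactly independent. The mixing bound rests on an isoperimetric (KLS-type) inequality for the current body --- available with a $\mathrm{poly}(d)$ loss --- fed into a conductance argument; the covariance bound needs a matrix-concentration / Rudelson inequality for log-concave samples, whose sample requirement and whose logarithmic losses are where the $\log^7 d$ in the statement originates. The delicate point is reconciling the total-variation distance to uniform that the walk can cheaply guarantee with the much finer spectral accuracy needed so that the $\tfrac12$--$2$ (and finally $1\pm\eps_0$) sandwich survives through all $O(\log R)$ compositions of the maps $A_i$; that bookkeeping is exactly what forces the $d^4$ and the polylogarithmic-in-$d$ factors in the statement.
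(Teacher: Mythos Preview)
The paper does not prove this theorem at all: it is quoted verbatim as \cite[Theorem~1.1]{LV:vol} and used as a black box, so there is no ``paper's own proof'' to compare your attempt against. Your sketch is a reasonable high-level outline of the Lov\'asz--Vempala rounding-by-sampling scheme (bootstrap over $O(\log R)$ phases on the nested bodies $K\cap 2^jB$, estimate the covariance from $\tilde O(d)$ near-uniform samples via a Rudelson-type bound, and whiten), and that is indeed the architecture of the cited result; but none of this appears in the present paper, which simply invokes the theorem as an imported tool.
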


We can use some of our observations in Section \ref{sec:base} to show:
\begin{lemma}
	Each of the polytopes $D_{\ge \ell}$ of nonzero volume
	contains a ball of radius  at least
$\frac{1}{(d+1)\left( d^{(d+1)/2}X^d \right)^{d(d+1)+1}}$.
\end{lemma}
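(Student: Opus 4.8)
The plan is to extract an explicit interior ball of $D_{\ge\ell}$ from the vertex-coordinate bounds already developed in Section~\ref{sec:base}. Recall from that analysis that every vertex $v$ of $D_{\ge\ell}$ is the solution of a system of $d$ hyperplane equations, each hyperplane spanned by $d$ points of $S$, and hence $v$ has rational coordinates with a common integral denominator of absolute value at most $d^{d(d+1)/2}X^{d^2}$; for brevity write $q:=d^{(d+1)/2}X^d$, so this denominator is at most $q^{\,d}$ (up to the harmless $d!$-type factors, which is why the exponent $d(d+1)+1$ appears in the statement rather than $d$). Since $D_{\ge\ell}$ is a convex polytope of nonzero volume, it is the convex hull of its vertices, and in particular it contains the centroid $c$ of any $d+1$ affinely independent vertices $v_1,\dots,v_{d+1}$. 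First I would fix such a $(d+1)$-tuple spanning a full-dimensional simplex $\tau\subseteq D_{\ge\ell}$, and lower-bound the inradius of $\tau$; any ball inscribed in $\tau$ is inscribed in $D_{\ge\ell}$.

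The inradius of a simplex $\tau$ equals $d\cdot\vol(\tau)/\mathrm{Surf}(\tau)$, where $\mathrm{Surf}(\tau)$ is the total $(d-1)$-volume of its facets. For the numerator, the Lemma preceding Theorem~\ref{thm:exp-mechanism} (and its proof) already gives $\vol(\tau)\ge 1/(dX)^{d^3}$, but here I would instead re-run that determinant argument keeping the denominators explicit: $\vol(\tau)$ is a positive rational whose denominator is $d!$ times a product of $d+1$ integers each at most $q^{\,d}$, hence $\vol(\tau)\ge 1/\bigl(d!\,q^{\,d(d+1)}\bigr)$. For the denominator, each facet of $\tau$ is a $(d-1)$-simplex spanned by $d$ of the $v_i$; its $(d-1)$-volume is bounded above by $\frac{1}{(d-1)!}$ times the product of the Euclidean lengths of $d-1$ edge vectors (Hadamard again), and each edge vector has entries that are differences of coordinates in $[0,1]$, so each such length is at most $\sqrt d$. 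Thus each facet has $(d-1)$-volume at most $d^{(d-1)/2}/(d-1)!\le 1$, and $\mathrm{Surf}(\tau)\le d+1$. Combining, the inradius is at least
\[
\frac{d\cdot\vol(\tau)}{\mathrm{Surf}(\tau)}\ \ge\ \frac{d}{(d+1)\,d!\,q^{\,d(d+1)}}\ \ge\ \frac{1}{(d+1)\,q^{\,d(d+1)+1}}
= \frac{1}{(d+1)\bigl(d^{(d+1)/2}X^d\bigr)^{d(d+1)+1}},
\]
where the extra power of $q$ absorbs the $d!$ factor (since $d!\le q$ for all $d\ge1$), matching the claimed bound. The inscribed ball of $\tau$ of this radius lies in $D_{\ge\ell}$, proving the lemma.

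The one point requiring a little care is the choice of the affinely independent $(d+1)$-tuple of vertices: this is exactly where full-dimensionality of $D_{\ge\ell}$ is used, since a polytope of nonzero volume in $\reals^d$ necessarily has $d+1$ affinely independent vertices. Everything else is a bookkeeping exercise in Hadamard's inequality and Cramer's rule, essentially identical to the volume computation already carried out for Theorem~\ref{thm:exp-mechanism}, so I do not expect a genuine obstacle; the only mild subtlety is getting the exponent right (tracking the $d!$ and the $\mathrm{Surf}(\tau)$ factor into a single clean power $d(d+1)+1$ of $d^{(d+1)/2}X^d$), which the displayed chain above does.
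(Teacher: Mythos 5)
Your proposal is correct in substance but finishes differently from the paper. Both arguments start from the same Section~\ref{sec:base} ingredient, namely that every vertex of $D_{\ge\ell}$ has rational coordinates with common denominator at most $q^d$, where $q=d^{(d+1)/2}X^d$. The paper then takes the centroid $z$ of $d+1$ affinely independent vertices, notes that its coordinates are rationals with common denominator at most $(d+1)q^{d(d+1)}$, and lower-bounds the distance from $z$ to each supporting hyperplane of $D_{\ge\ell}$ by $1/\bigl((d+1)q^{d(d+1)+1}\bigr)$, using the fact that these hyperplanes have integer coefficients with Euclidean norm at most $q$; the minimum such distance is the radius of a ball around $z$ contained in $D_{\ge\ell}$. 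You instead inscribe the ball in a full-dimensional vertex simplex $\tau$ via the inradius identity $r=d\,\vol(\tau)/\mathrm{Surf}(\tau)$, trading the integrality of the hyperplane coefficients for a Hadamard bound on facet areas; this is a legitimate alternative and yields the same bound. Two arithmetic slips should be repaired, though. First, $d^{(d-1)/2}/(d-1)!\le 1$ is false for $2\le d\le 5$, so you cannot conclude $\mathrm{Surf}(\tau)\le d+1$; keep the honest bound $\mathrm{Surf}(\tau)\le (d+1)\,d^{(d-1)/2}/(d-1)!$. Second, the absorption ``$d!\le q$'' is not valid for all $d$ and $X$ (e.g.\ $X$ a small constant and $d$ large). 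Both issues disappear simultaneously if you retain the $(d-1)!$ from the facet bound: then $r\ \ge\ \frac{d\,(d-1)!}{d!\,(d+1)\,d^{(d-1)/2}\,q^{d(d+1)}}\ =\ \frac{1}{(d+1)\,d^{(d-1)/2}\,q^{d(d+1)}}\ \ge\ \frac{1}{(d+1)\,q^{d(d+1)+1}}$, since $d^{(d-1)/2}\le q$, and no comparison of factorials with $q$ is needed.
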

\begin{proof}
	As we showed in Section \ref{sec:base}, each vertex of $D_{\ge \ell}$ has rational coordinates of
	common integral denominator of value at most $d^{d(d+1)/2}X^{d^2}$.
	We take
	$d+1$ linearly independent vertices of $D_{\ge \ell}$, and denote by $z$ their center of mass, which is guaranteed to be inside $D_{\ge \ell}$. 
	This center of mass $z$
 has rational coordinates with a common integer denominator of value at most
	$(d+1)(d^{d(d+1)/2}X^{d^2})^{(d+1)}$.
	
	We also showed in Section \ref{sec:base} that each hyperplane $h$ defining $D_{\ge \ell}$
	has integer coefficients of absolute value at most $d^{d/2}X^{d}$.

	The largest radius of a ball enclosed in $D_{\ge \ell}$ and centered at $z$ is the minimum distance of $z$ from the defining (supporting) hyperplanes of $D_{\ge\ell}$. The distance from any such hyperplane $h$ is positive (since $z$ lies in the interior of $D_{\ge\ell}$), and is a rational whose denominator is the common denominator of the coordinates of $z$ times
 the Euclidean norm of the vector of coefficients of $h$, excluding the free term, which
	is at most $\sqrt{d} d^{d/2}X^{d}$. That is, this distance is 
 at most
	$$
 \frac{1}{(d+1)d^{d(d+1)^2/2}X^{d^2(d+1)}\cdot d^{(d+1)/2}X^{d}}=\frac{1}{(d+1)\left( d^{(d+1)/2}X^d \right)^{d(d+1)+1}}\ .
	$$
	 This implies the lemma.
\end{proof}

$D_{\ge \ell}$ is contained in the unit cube, and therefore it is contained in a ball of radius $\sqrt{d}$.
It follows that we can scale $D_{\ge \ell}$ so that it satisfies the conditions of Theorem \ref{thm:lv},
with $R = {(d+1)^{3/2}\left( d^{(d+1)/2}X^d \right)^{d(d+1)+1}}$. Hence we have
$\log R = O(d^3(\log X + \log d))$.

Our entire volume estimation algorithm applies Theorem \ref{thm:lv} to $D_{\ge \ell}$ and
then Theorem \ref{thm:cv} to the `rounded' polytope. By the union bound it fails with probability at most $2\beta$.
The running time is larger by a factor of  $O(d^2)$ than the number of membership queries.

\subsection{Implementing the exponential mechanism}

We now show how to use the estimates of the volumes of the polytopes $D_{\ge \ell}$ to sample
``almost'' uniformly from a region $D_{\ell}$, which is picked with probability $\mu'_\ell$, where
\[
\frac{1-\alpha}{1+\alpha}\mu_\ell \le \mu'_\ell \le \frac{1+\alpha}{1-\alpha}\mu_\ell ,\qquad\text{and}\qquad
\mu_\ell = \frac{e^{\eps \ell/2}{\rm Vol}(D_\ell)}
{\sum_{j\ge 0} e^{\eps j/2} {\rm Vol}(D_j)} ,
\]
as defined in Section \ref{sec:base}.

We carry out this modified sampling by defining new probabilities $\lambda'_j$,
for $j=0,\ldots,\td_\mx(S)$ (so that $\sum_j \lambda'_j = 1$),
choosing the region $D_{\ge \ell}$ (rather than $D_\ell$) with probability $\lambda'_\ell$
and then sampling almost uniformly a point in the chosen region $D_{\ge \ell}$, using the
sampling algorithm in \cite{CV}, whose properties are  stated in the following theorem.
In the theorem, the \emph{variation distance} between two measures $\mu$, $\nu$
is $\max_B |\mu(B) - \nu(B)|$, over all possible events $B$.
\begin{theorem}
	[Cousins and Vempala \protect{\cite[Theorem 1.2]{CV}}]\label{thm:cv-samples}
	Let $K$ be a convex body in $\reals^d$ that
	contains the unit ball around the origin and satisfies $E_K(||X||^2)=O(d)$.
	Then, for any $\eta, \beta > 0$, there is an algorithm that can access $K$
	only via membership queries, and, with probability at least $1-\beta$, generates
	random points from a distribution $U'_K$ that is within total variation distance
	at most $\eta$ from the uniform distribution on $K$.
	The algorithm performs $O({d^3}\log d \log^2 \frac{d}{\eta} \log \frac{1}{\beta})$ membership queries
	for each random point that it generates. The running time is larger by a factor of  $O(d^2)$ than the number of membership queries.
\end{theorem}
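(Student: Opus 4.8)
The plan is to prove the sampling statement via the \emph{simulated annealing} (``Gaussian cooling'') paradigm for log-concave sampling: interpolate between a trivially samplable distribution and (approximately) the uniform measure on $K$ through a short sequence of truncated Gaussians, and traverse the sequence with a Metropolis random walk, each phase warm-started by the output point of the previous one. For a variance schedule $\sigma_0^2<\sigma_1^2<\cdots<\sigma_m^2$, set $f_i(x)\propto e^{-\|x\|^2/(2\sigma_i^2)}\,\mathbf{1}_{x\in K}$ and $Z_i=\int_K e^{-\|x\|^2/(2\sigma_i^2)}\,dx$. I would take $\sigma_0\approx 1/\sqrt d$, small enough that $N(0,\sigma_0^2 I)$ puts all but an $O(\eta)$-fraction of its mass in the unit ball, which is contained in $K$; then $f_0$ is within variation distance $O(\eta)$ of a plain Gaussian and is sampled directly by rejection in $O(1)$ expected membership queries. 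I would take $\sigma_m^2=\mathrm{poly}(d)\cdot\log(1/\eta)$ large enough that, using the hypothesis $E_K(\|X\|^2)=O(d)$ together with standard log-concave concentration, $f_m$ is within variation distance $\eta/2$ of the uniform distribution on $K$; since $\log(\sigma_m/\sigma_0)=O(\log(d/\eta))$ the schedule is short. Crucially the cooling rate is \emph{adaptive}, $\sigma_{i+1}^2=\sigma_i^2\bigl(1+c\min\{1,\sigma_i/\sqrt d\}\bigr)$: this rate makes consecutive phases satisfy $Z_i^2/(Z_{i-1}Z_{i+1})=O(1)$, so that a sample from $f_i$ is an $O(1)$-warm start for $f_{i+1}$, while keeping the number of phases at only $m=O(\sqrt d\log(d/\eta))$ (the slow, $\Theta(\sigma_i/\sqrt d)$, regime $\sigma_i<\sqrt d$ is what forces the $\sqrt d$).

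Within phase $i$ I would run the ball walk with a Metropolis filter for $f_i$: from the current point $x$, pick $x'$ uniformly in the ball of radius $\rho_i\approx\min\{\sigma_i,1\}/\sqrt d$ around $x$, reject if $x'\notin K$ (one membership query), and otherwise accept with probability $\min\{1,e^{(\|x\|^2-\|x'\|^2)/(2\sigma_i^2)}\}$ (no query needed, since the Gaussian weight is evaluated directly). Each step thus costs $O(1)$ membership queries. To bound the number of steps I would invoke the Cheeger-type isoperimetric inequality for a Gaussian restricted to a convex body to lower-bound the conductance of this chain, and combine it with the $O(1)$-warm start from the previous phase: these two inputs give that the phase-$i$ chain reaches variation distance $\le\eta/(2m)$ from $f_i$ within a number of steps that, summed over the $m$ phases, is $O(d^3)$ up to logarithmic factors. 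Composing the phases via the triangle (data-processing) inequality for variation distance, the output is within $\eta/2+m\cdot\eta/(2m)=\eta$ of the uniform distribution on $K$, as claimed. Tracking the logarithms yields the stated count $O\!\left(d^3\log d\log^2\frac{d}{\eta}\log\frac1\beta\right)$: one factor $\log\frac d\eta$ from the phase count $m$, one from the per-phase mixing accuracy $\eta/(2m)$ appearing inside a logarithm, and $\log\frac1\beta$ from boosting the probability that every phase's chain in fact mixes and its warm-start ratio stays bounded. The running-time assertion follows since one membership query here is an $O(d^2)$-time computation.

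The main obstacle is exactly the per-phase mixing analysis: establishing the sharp conductance (isoperimetric) bound for the ball walk under the truncated-Gaussian measure $f_i$ — equivalently, the Kannan--Lov\'asz--Simonovits-type isoperimetry for Gaussians restricted to convex bodies — and balancing it against the adaptive cooling schedule so that the phase count ($\approx\sqrt d$) and the per-phase mixing time ($\approx d^{5/2}$) are both small \emph{simultaneously}. Obtaining the product $O(d^3)$ rather than a larger power of $d$ is the delicate point, and is where essentially all of the effort lies; this is precisely the Cousins--Vempala argument, which I would take as the template for the proof.
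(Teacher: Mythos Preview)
The paper does not prove this theorem; it is quoted as \cite[Theorem~1.2]{CV} and used purely as a black box. There is therefore no ``paper's own proof'' to compare your proposal against. What you have written is a faithful high-level outline of the actual Cousins--Vempala Gaussian-cooling argument (adaptive variance schedule, ball walk with Metropolis filter, $O(1)$-warm starts between phases, isoperimetry for truncated Gaussians), and you correctly flag that the sharp conductance bound and its interaction with the adaptive schedule is where the real work lies. As a sketch of \cite{CV} it is accurate; as a contribution to the present paper it is unnecessary, since the authors simply cite the result.

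One small inaccuracy: the $O(d^2)$ overhead in the last sentence of the theorem is the arithmetic cost per walk step (sampling a point in a $d$-ball, computing $\|x\|^2$, evaluating the Metropolis ratio), not the cost of a membership query. In this paper's application the membership oracle itself costs $O(n^{d-1})$ per call (see the ``Membership tests'' paragraph), and that is accounted for separately.
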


\paragraph*{Exact sampling.}

Let us denote $V_\ell := {\rm Vol}(D_{\ge \ell})$ and $v_\ell := {\rm Vol}(D_{ \ell})$.
We first derive probabilities $\lambda_\ell$ such that if we pick
$D_{\ge \ell}$ with probability $\lambda_\ell$, and then pick a point uniformly from
$D_{\ge \ell}$, then we simulate the exponential mechanism (over the regions $D_\ell$) exactly.

The probability that $q\in D_m$, conditioned on having chosen $D_{\ge \ell}$
to sample from, is $v_m/V_\ell$ for $m\ge\ell$; it is $0$ for $m<\ell$.
Therefore, the unconditional probability that $q\in D_m$ is
${\displaystyle \left( \sum_{\ell\le m} \frac{\lambda_\ell}{V_\ell} \right) v_m}$,
and we want this probability to be proportional to $e^{\eps m/2}v_m$.
That is, we want all the equalities
${\displaystyle \sum_{\ell\le m} \frac{\lambda_\ell}{V_\ell} = C e^{\eps m/2}}$,
for $m\ge 0$, to hold, for a suitable normalizing parameter $C$.
Putting $x_\ell = \lambda_\ell/V_\ell$, we get the triangular system
${\displaystyle \sum_{\ell\le m} x_\ell = C e^{\eps m/2}}$,
for $m\ge 0$, which solves to
\[
x_0 = C, \qquad\text{and}\qquad
x_\ell = C\left( e^{\eps\ell/2} - e^{\eps(\ell-1)/2} \right) = C\left(1-e^{-\eps/2}\right) e^{\eps\ell/2} ,
\]
for $\ell \ge 1$. That is, to ensure that $\sum_{\ell \ge 0} \lambda_\ell = 1$, we then put
\[
C = \frac{1}{V_0 + \left(1-e^{-\eps/2}\right) \sum_{j \ge 1} e^{\eps j/2} V_j},\]
and \[ \lambda_0 = C V_0, \; \text{and }
\lambda_\ell = C\left(1-e^{-\eps/2}\right) e^{\eps\ell/2} V_\ell, \text{   for } \ell \ge 1 .
\]

\paragraph*{Approximate sampling.}
We approximate these probabilities, by replacing the exact volumes
$V_\ell$ by their approximate values $V'_\ell$ that we obtain from Theorems \ref{thm:lv} and \ref{thm:cv},
and get the corresponding approximations $C'$ to $C$ and $\lambda'_\ell$ to $\lambda_\ell$.
By construction, and by the probability union bound, we have, for the prescribed error parameter $\alpha< 1$,
with probability at least $1-2n\beta$,
${\displaystyle (1-\alpha)V_\ell \le V'_\ell \le (1+\alpha)V_\ell}$, for every $\ell=0,\ldots,\td_\mx(S)$,
which is easily seen to imply that
\begin{equation} \label{lambdap}
\frac{1-\alpha}{1+\alpha}\lambda_\ell \le \lambda'_\ell \le \frac{1+\alpha}{1-\alpha}\lambda_\ell ,\qquad\text{ for every $\ell$} .
\end{equation}

We sample $D_{\ge \ell}$ using the probabilities $\lambda'_\ell$, and then
sample from $D_{\ge \ell}$ using Theorem \ref{thm:cv-samples}.
Denote by $\Pr_{\rm exp}(B)$, for any measurable subset $B$ of $[0,1]^d$, the probability that the pure exponential mechanism,
that uses the (costly to compute) exact values $\lambda_\ell$, as just reviewed, returns a point in $B$.
It follows that the conditional probability $\Pr(B)$ of returning
a point in $B$, conditioned on all the parameters $\lambda'_\ell$ satisfying the inequalities (\ref{lambdap}),
and conditioned on successful sampling from $D_{\ge \ell}$ using Theorem \ref{thm:cv-samples},
satisfies
\begin{lemma} \label{lem:lambdap}
	\begin{equation} \label{eq:11}
	\frac{1-\alpha}{1+\alpha}\Pr_{\rm exp}(B) -\eta \le \Pr(B) \le \frac{1+\alpha}{1-\alpha}\Pr_{\rm exp}(B) +\eta \ .
	\end{equation}
\end{lemma}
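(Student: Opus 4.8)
The plan is to write both $\Pr_{\rm exp}(B)$ and $\Pr(B)$ as mixtures over the regions $D_{\ge \ell}$ and compare them term by term. The pure mechanism picks $D_{\ge \ell}$ with probability $\lambda_\ell$ and then returns a point distributed according to the uniform measure $U_{D_{\ge \ell}}$ on $D_{\ge \ell}$, so $\Pr_{\rm exp}(B) = \sum_{\ell} \lambda_\ell\, U_{D_{\ge \ell}}(B)$. Conditioned on the two good events in the statement — all the volume estimates $V'_\ell$ lying in $(1\pm\alpha)V_\ell$, so that \eqref{lambdap} holds and, in particular, $\sum_\ell \lambda'_\ell = 1$ (since $C'$ is the analogue of $C$ for the $V'_\ell$'s), and the sampler of Theorem~\ref{thm:cv-samples} succeeding for the chosen $\ell$ — the approximate mechanism picks $D_{\ge\ell}$ with probability $\lambda'_\ell$ and returns a point distributed according to a measure $U'_{D_{\ge\ell}}$ with $\|U'_{D_{\ge\ell}} - U_{D_{\ge\ell}}\|_{\rm TV} \le \eta$; hence $\Pr(B) = \sum_\ell \lambda'_\ell\, U'_{D_{\ge\ell}}(B)$.

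For the upper bound I would first absorb the sampling error: variation distance at most $\eta$ gives $U'_{D_{\ge\ell}}(B) \le U_{D_{\ge\ell}}(B) + \eta$ for every $\ell$, and since $\sum_\ell \lambda'_\ell = 1$ this yields $\Pr(B) \le \sum_\ell \lambda'_\ell\, U_{D_{\ge\ell}}(B) + \eta$. Then I would use the right inequality of \eqref{lambdap}, $\lambda'_\ell \le \frac{1+\alpha}{1-\alpha}\lambda_\ell$, together with $U_{D_{\ge\ell}}(B)\ge 0$, to get $\sum_\ell \lambda'_\ell\, U_{D_{\ge\ell}}(B) \le \frac{1+\alpha}{1-\alpha}\sum_\ell \lambda_\ell\, U_{D_{\ge\ell}}(B) = \frac{1+\alpha}{1-\alpha}\Pr_{\rm exp}(B)$, which gives the claimed upper bound in \eqref{eq:11}. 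The lower bound is symmetric: $U'_{D_{\ge\ell}}(B) \ge U_{D_{\ge\ell}}(B) - \eta$ gives $\Pr(B) \ge \sum_\ell \lambda'_\ell\, U_{D_{\ge\ell}}(B) - \eta$, and $\lambda'_\ell \ge \frac{1-\alpha}{1+\alpha}\lambda_\ell$ together with $U_{D_{\ge\ell}}(B)\ge 0$ gives $\sum_\ell \lambda'_\ell\, U_{D_{\ge\ell}}(B) \ge \frac{1-\alpha}{1+\alpha}\Pr_{\rm exp}(B)$.

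There is no substantial obstacle here; the only points to be careful about are (i) that the two error sources enter differently — the volume-estimation error is multiplicative and is carried from $V'_\ell$ to $\lambda'_\ell$ through \eqref{lambdap}, whereas the sampler's variation-distance error is additive and is pulled out only after invoking $\sum_\ell \lambda'_\ell = 1$, so the two do not compound — and (ii) that every $U_{D_{\ge\ell}}(B)$ is nonnegative, which is what licenses replacing each $\lambda'_\ell$ by its one-sided bound in terms of $\lambda_\ell$ without sign issues. All of this is conditioned on the stated good events, so the failure probabilities of Theorems~\ref{thm:lv}, \ref{thm:cv}, and \ref{thm:cv-samples} are handled separately (by the union bound, as in the discussion preceding the lemma) and do not appear in \eqref{eq:11}.
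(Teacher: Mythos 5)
Your proposal is correct and follows essentially the same route as the paper: decompose $\Pr(B)$ as a mixture over the regions $D_{\ge\ell}$, absorb the sampler's total-variation error additively (using that the $\lambda'_\ell$ sum to $1$), carry the volume-estimation error multiplicatively through (\ref{lambdap}), and identify $\sum_\ell \lambda_\ell\,\Pr_u(B\mid D_{\ge\ell})$ with $\Pr_{\rm exp}(B)$ via the exact-simulation computation. No substantive differences from the paper's argument.
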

\begin{proof}
	Indeed, by definition of
	our algorithm we have that
	\begin{equation} \label{eq-l0}
	\Pr(B) = \sum_{\ell\ge 0} \Pr\left(B | D_{\ge\ell}\right) \cdot \Pr(D_{\ge\ell}) ,
	\end{equation}
	where $D_{\ge\ell}$ is a mnemonic for the event that the mechanism decides to sample from the region $D_{\ge\ell}$.
	By Theorem \ref{thm:cv-samples} we have that
	\begin{equation} \label{eq-l2}
	\Pr_u\left(B | D_{\ge\ell}\right) -\eta \le
	\Pr\left(B | D_{\ge\ell}\right) \le \Pr_u\left(B | D_{\ge\ell}\right) +\eta \ ,
	\end{equation}
	where
	$\Pr_u\left(B | D_{\ge\ell}\right)$ is the probability
	to get a point in $B$ if we sample uniformly from $D_{\ge\ell}$. Combining Equations
	(\ref{eq-l0}) and (\ref{eq-l2}) we get that
	\begin{equation} \label{eq-l4}
	\sum_{\ell\ge 0}\Pr_u\left(B | D_{\ge\ell}\right) \cdot \Pr(D_{\ge\ell}) -\eta \le \Pr(B) \le \sum_{\ell\ge 0} \Pr_u\left(B | D_{\ge\ell}\right) \cdot \Pr(D_{\ge\ell}) +\eta \ .
	\end{equation}
	Now $\Pr(D_{\ge\ell})$  in our approximation scheme is $\lambda'_\ell$, so we get, using  (\ref{lambdap}), that
	\[
	\frac{1-\alpha}{1+\alpha} \sum_{\ell\ge 0} \lambda_\ell \Pr_u\left(B | D_{\ge\ell}\right) -\eta \le
	\Pr(B) \le \frac{1+\alpha}{1-\alpha} \sum_{\ell\ge 0} \lambda_\ell \Pr_u\left(B | D_{\ge\ell}\right) +\eta \ ,
	\]
	which implies the lemma by our argument  regarding the exact simulation of the exponential mechanism given above.
\end{proof}

If we set $\eta = \delta$, $\beta$ to be some constant multiple of $\frac{\delta}{n}$ (we divide by $n$ to account for the failure probablity over $D_{\ge \ell}$ for all $\ell$'s), and $\alpha$ to be a constant multiple of $\eps$ 
then it  follows from Equation (\ref{eq:11}) that our approximation of
the exponential mechanism is $(\eps, \delta)$-private.

It is also easy to verify that the utility of our approximate exponential mechanism is essentially the same as the utility analysis in Section \ref{sec:base}. This proves Theorem \ref{thm:approx}. \qed



\section{Conclusions}
We gave an $O(n^d)$-time algorithm for 
privately computing a point in the convex hull
of $\Omega(d^4  \log X)$ points with coordinates that are multiples
of $1/X$ in $[0,1]$.  
Even though this gives a huge improvement of what was  previously known and requires some nontrivial technical effort, and sophisticated sampling and volume estimation tools, this running time is still not satisfactory for large values of $d$. 
The main hurdle in improving it further is the 
nonexistence
of efficient algorithms for computing Tukey depths and Tukey levels. 

The main question that we leave open is whether there exists a differentially private algorithm for this task which is polynomial in $n$ and $d$ ? (when the input size, $n$, is still polynomial in $\log X $ and $d$).

\section*{Acknowledgments}
We thank Santosh Vempala for many helpful discussions.

\end{document}